\documentclass[11pt]{article}

\usepackage[utf8]{inputenc}         % Unicode support
\usepackage{amsmath}                % Math typesetting
\usepackage{amsfonts}               % Math fonts
\usepackage{amssymb}                % More math symbols
\usepackage{graphicx}               % Inserting graphics
\usepackage{fancyhdr}               % Fancy headers
\usepackage{hyperref}               % Hyperlinks
\usepackage{algorithm}
\usepackage{algorithmicx}
\usepackage[noend]{algpseudocode}
\usepackage{cleveref}
\usepackage{booktabs}
\usepackage{makecell}
\usepackage[toc,page]{appendix}
\usepackage{lipsum}                 % Placeholder text (can be removed)
\usepackage{cite}                   % Citation management
\usepackage{listings}               % Code listings
\usepackage{xcolor}
\usepackage{todonotes}              % Todo notes
\usepackage{siunitx}
\usepackage{amsthm}
\usepackage{aligned-overset}

\DeclareMathOperator*{\argmin}{arg\; min}
\DeclareMathOperator*{\argmax}{arg\; max}
\DeclareMathOperator*{\minimize}{minimize\ }
\DeclareMathOperator*{\maximize}{maximize\ }
\DeclareMathOperator*{\tr}{trace }
\DeclareMathOperator*{\diag}{diag }

\newcommand{\tF}{\Tilde{F}}
\newcommand{\tOmega}{\Tilde{\Omega}}
\newcommand{\rev}[1]{\textcolor{black}{#1}}
\newcommand{\rf}[1]{\textcolor{black}{#1}}

\newtheorem{lemma}{Lemma}

\usepackage[letterpaper, margin=1in]{geometry}

\title{Enhancing a Risk Model by Adding Transient Statistical Factors}
\author{Alexandros E. Tzikas\textsuperscript{1,2}\and Emmanuel J. Candès\textsuperscript{1,2}\and Trevor Hastie\textsuperscript{1,2}\and Stephen P. Boyd\textsuperscript{1,2}\and Mykel J. Kochenderfer\textsuperscript{1,2}\and Ronald N. Kahn\textsuperscript{2}}
\date{
\textsuperscript{1}Stanford University, Stanford, CA 94305, USA\\
\textsuperscript{2}\rev{BlackRock, 400 Howard Street, San Francisco, CA 94105, USA}
}

\begin{document}
% Title page
\maketitle

\begin{abstract}
    Estimating the covariance of asset returns, i.e., the risk model, is a key component of financial portfolio construction and evaluation. 
    % An accurate risk model allows investors to build risk-aware portfolios and assess the risk level of already constructed ones. 
    Most risk modeling approaches produce a \rf{factor} model that decomposes the asset variability into two components: the first attributed to a small number of factors that are common among the assets and the second attributed to the idiosyncratic behavior of each asset. 
    % \rf{Factor} models are both interpretable and computationally efficient when used for portfolio construction. 
    Third-party providers typically provide risk models to investors, and while these models are typically of high quality, they may fail to capture important information, e.g., changing market regimes and transient factors.
    % \rev{or may not be tuned to the desired asset universe}. 
    % \ejc{I do not understand the last sentence. What do we mean it is not tuned to the asset universe?} 
    To overcome these limitations, we propose a systematic method based on maximum likelihood estimation to enhance \rf{an existing factor} model by \rf{both} refining the given model and adding new statistical factors. Our approach relies only on the observed sequence of realized returns and on the choice of two hyperparameters: the number of additional factors and the half-life parameter that determines the weights assigned to returns in the log-likelihood objective.
    \rf{Importantly, our methodology applies to the situation where asset returns may be missing, making it suitable for typical equity datasets.}
    \rev{We demonstrate our approach on the Barra short-term US risk model, a high-quality risk model used in practice, for a universe of US high-capitalization equities. We show that the proposed extension captures structure in the returns that is missed by the original model.} 
    % \ejc{The abstract is fine but 50\% is dedicated to introducing factor models which are well understood. Comparably little time is spent discussing what the paper actually does, which should be the point of an abstract. (fixed, I think).}
\end{abstract}

\section{Introduction}\label{sec:intro}
    The covariance of asset returns is a central quantity in several areas of finance, including Markowitz portfolio construction \cite{markowitz1952, grinold2000active, boyd2024markowitz}, risk management \cite{mcneil2015quantitative}, asset pricing \cite{sharpe1964capital}, \rf{and performance attribution analysis \cite{menchero2008custom}}.
    % In this paper, we call an asset any financial instrument that is part of a portfolio or investment strategy, such as a stock, bond, or exchange traded fund (ETF), among others.
    % \rev{In this paper, we use the term asset to broadly refer to an individual tradable financial instrument that forms one element of a portfolio or investment strategy, such as a stock, bond, or exchange-traded fund (ETF).}
    % The risk of a portfolio (or investment strategy) is defined as the variance of its return, where the portfolio return is a linear function of the asset returns vector \cite{boyd2024markowitz}. 
    \rf{The} \rf{variance} of a portfolio (or investment strategy) is a linear function of the asset return covariance. An accurate \rev{(as judged by its statistical fit)} asset return covariance allows investors to build portfolios with a desired level of risk and also evaluate the risk of existing portfolios. An inaccurate risk model can over-estimate or under-estimate portfolio risk. As a result, during portfolio construction, it may be overly restrictive, causing the investor to miss out on potentially good investments, or fail to account for certain risk sources, potentially exposing the investment strategy to unexpected risks.
Risk modeling is therefore a challenging problem in any investment process.

A challenge is that only information available at the current time can be used to estimate the risk model, which should be predictive of the assets' behavior in the future \rf{(that is, out-of-sample)}. \rev{Prediction} is particularly difficult because of changing market conditions \cite{stuaricua2005nonstationarities}, extreme events (e.g., COVID-19), and fat tails in the distribution of returns \cite{fan2016overview}. In addition, asset returns \rev{can fluctuate significantly} even on a daily basis, making it difficult to separate meaningful structure from randomness.
The high dimensionality of the data, \rev{i.e., the large number of assets (possibly thousands)}, is another challenge \cite{fan2008high}.  \rf{Capturing the dependence structure} of thousands of assets requires simultaneously and accurately estimating a very large number of parameters from return data. For instance, the sample covariance based on the observed data is singular
when the number of assets is larger than the number of sample return vectors \cite{fan2016overview}. \rev{This implies that there are directions in asset space with zero estimated variance, and it is easy to understand that this will cause problems in portfolio construction.} Because of these challenges, various companies, for example MSCI Barra \cite{msci_barra2023, menchero2011barra}, offer risk models as products and a Nobel Memorial Prize in Economic Sciences was awarded for work directly related to volatility estimation \cite{engle1982autoregressive}.

A standard model for the asset returns is the \rf{factor} model. \rf{At any time $t$}, it assumes that the vector of asset returns, $x_t \in \mathbb{R}^n$, is driven by a low-dimensional vector of factor returns, $s_t \in \mathbb{R}^{n_1}$, and a vector of residual returns $\epsilon_t \in \mathbb{R}^n$, i.e.,
\begin{equation}
    x_t = F_t s_t + \epsilon_t,
\end{equation}
where $F_t \in \mathbb{R}^{n \times n_1}$ is the matrix of factor exposures, $\epsilon_t$ is independent of $s_t$, the \rev{$n$ elements of $\epsilon_t$ are independent}, and $n_1 \ll n$ \cite{grinold2000active}. \rf{In this model, $x_t$, $s_t$, and $\epsilon_t$ are random vectors and $F_t$ is a non-random matrix that can vary with time $t$.} The model implies that there are common factors that influence the returns of the assets. For equities, factors typically include the returns of specific portfolios, such as the
overall market (with weights proportional to capitalization), industries, and style portfolios like the Fama-French factors \cite{fama1992cross, fama1993common}. The model also implies that there is a part of the returns that cannot be attributed to these factors. We call this part the idiosyncratic return, $\epsilon_t$, which is independent of the factor returns, $s_t$, and also independent for each asset.

Under these assumptions the asset return covariance \rf{is given by the low-rank-plus-diagonal form}
\begin{equation}\label{eq:risk_model}
    \mathbf{cov}(x_t) = F_t \mathbf{cov}(s_t) F_t^\top + \mathbf{cov}(\epsilon_t),
\end{equation}
where $\mathbf{cov}(\cdot)$ denotes the covariance. Because the \rev{elements} of $\epsilon_t$ are independent, $\mathbf{cov}(\epsilon_t)$ is a diagonal matrix with positive diagonal entries.
This structure is especially appealing in high dimensions ($n$ in the order of hundreds or thousands), because we only need to estimate $\mathcal{O}(n n_1)$ parameters, instead of $\mathcal{O}(n^2)$ in the case of a generic $n \times n$
covariance matrix. The smaller number of parameters required in the low-rank-plus-diagonal structure \rf{\eqref{eq:risk_model}} is also a form of regularization, which avoids overfitting and can improve the covariance estimate, especially for out-of-sample applications (such as predicting future portfolio risk) \cite{fan2008high}. The low-rank-plus-diagonal structure also allows for a positive definite covariance matrix even when \rev{there are fewer than} $n$ return samples. Finally, in the context of portfolio construction using convex optimization, the low-rank-plus-diagonal structure can be exploited to bring the computational complexity down from $\mathcal{O}(n^3)$ to $\mathcal{O}(n n_1^2)$ operations \cite{boyd2024markowitz}.

In this paper, we assume a low-rank-plus-diagonal risk model is provided. We call this the \textit{base} or \textit{existing} model. This scenario is typical in the case of investors or quantitative traders, e.g., an MSCI Barra \cite{msci_barra2023, menchero2011barra} model is available. We propose a principled method based on maximum likelihood estimation to refine the existing model and extend it with additional \rev{statistical} factors using the time series of realized asset returns. Our method can be used to capture market shifts of a shorter time horizon than the update period of the provided risk model.
% \rev{or tune the model to a desired asset universe}. 
Furthermore, the added statistical factors capture structure in the returns that is missed by the base model.
\rev{As a demonstration of our approach, we extend the Barra short-term US risk model (already high-quality and widely used in practice) for a universe of 870 US high-capitalization equities and obtain an improvement in terms of the model's out-of-sample statistical fit. We show that the extended model uncovers hidden structure in the returns that is missed by the base model.}
\rev{As a secondary contribution, we also show how to handle missing return data within our framework. Missing returns arise in financial datasets for several practical reasons, such as reporting delays, non-synchronous trading across time zones, or unavailable price observations on certain dates.}

The remainder of the paper is organized as follows. In \Cref{sec: prior_work}, we discuss related work in covariance estimation and low-rank-plus-diagonal risk modeling. In \Cref{sec:problem_form}, we mathematically formulate our maximum likelihood estimation problem. Our proposed approach is described in \Cref{sec:proposed}. We discuss metrics and results in \Cref{sec:resultss}. The paper concludes with \Cref{sec:conclusion}.

\section{Prior work}\label{sec: prior_work}
We review \rf{in \Cref{sec:empirical_cov}} the standard approaches to empirical covariance estimation that rely solely on the time series of historical returns and do not impose any structural assumptions.
Next, \rf{in \Cref{sec:lrpd}} we turn to techniques that introduce additional structure into the covariance matrix—most notably low-rank-plus-diagonal decompositions—which aim to capture common risk factors and improve stability in high-dimensional settings.

\subsection{Empirical covariance estimation}\label{sec:empirical_cov}
Given a time-series of historical returns, the simplest covariance estimate is the sample covariance. However, when the number of assets
is large relative to \rev{the number of return samples} (which is almost always true in practice), the sample covariance is a very poor approximation of the true covariance \cite{pafka2003noisy} and its \rf{eigenvalues can have high variance}. Generally, the sample covariance may fail to capture the time-dependence and regime shifting of financial markets.
A simple way to deal with the regime shifting is to use a rolling window estimate, which has a fixed memory window and only averages the return outer products in this window. To improve the quality of the estimator we can add regularization or shrinkage \cite{ledoit2003honey}. 
Practitioners often replace rolling window covariance estimates with exponentially weighted moving averages (EWMAs) to place greater weight on recent observations, thereby adapting more quickly to time-varying conditions \cite{johansson2023simple, menchero2011barra, jp1996riskmetrics}.
% However, \rev{a fixed window can introduce abrupt changes in} covariance estimates and \rev{can make} the estimator highly sensitive. 
% \ejc{Which estimator is sensitive? Also, I don't think that if I have a rolling window of 90 days, a shift of 1 day will cause an abrupt change. I am not even sure I understand why it will cause a larger change than the EWMA. (fixed)} 
% The exponentially weighted moving average (EWMA) estimator is the smooth equivalent of the rolling window estimator and better handles the non-stationarity of financial returns \cite{barratt2023covariance, engle2002dynamic}.
The EWMA estimator is controlled by the half-life parameter that dictates how far in the past the weight of an observation is reduced to $0.5$. The half-life parameter is thus completely interpretable. The EWMA estimate also allows for fast, recursive computation.

When the number of assets is large and the half-life (or the length of the returns time-series) is small, naive covariance estimation leads to singular matrices. In these cases, even though it might fail to capture information in the returns, a popular remedy is to learn a factor model \cite{grinold2000active}, which we discuss next.

\subsection{Low-rank-plus-diagonal risk models}\label{sec:lrpd}
\rev{Factor models are a natural way to represent correlations among assets, because assets in similar market sectors tend to behave similarly\rf{~\cite{rosenberg1974extra}}.}
We discuss methods that produce risk models in low-rank-plus-diagonal form. These methods differ in how they obtain the factor exposure matrix, the covariance of the factor returns, and the covariance of the idiosyncratic returns. \rev{Each method corresponds to a different category of factors: fundamental, macroeconomic, and statistical.}
% MK: drop the "f we consider the nature of the involved factors" (might be a little confusing) (Fixed)
% There are \rev{three} groups of methods if we consider the nature of the involved factors: fundamental, \rev{macroeconomic}, and statistical \cite{grinold2000active}.
\rev{Fundamental and macroeconomic} factors are built from observable economic, financial, or asset-specific characteristics. Such a factor can correspond to a concept that has an intuitive interpretation grounded in economics, corporate structure, or investor behavior \cite{fabozzi2008handbook}. \rev{Fundamental factors include} industry/sector classifications, \rev{while macroeconomic factors include} \rf{inflation, unemployment, industrial productivity, as well as oil, gold, and interest rates}. 
% The \rev{macroeconomic} Fama-French factors \cite{fama1992cross, fama1993common} are built from returns on value-weighted or equal-weighted portfolios formed from characteristics such as size, book-to-market, profitability, or investment. 
Statistical factors are extracted directly from historical return data through dimensionality-reduction techniques \cite{rubin1982algorithms, seghouane2008iterative}. They attempt to capture the main directions of covariance in the return vector without imposing any economic interpretation.

\paragraph{Methods based on fundamental factors} \rev{In this case, we calculate the factor exposure matrix and then estimate the factor returns by regressing the asset returns on the factor exposures \cite{nielsen2010fundamentals}. \rf{Either ordinary or generalized least squares (where we scale each asset by its volatility) is used in this empirical approach.} We then estimate the covariance of the factor returns using a method from \Cref{sec:empirical_cov} and the variances of the idiosyncratic returns similarly from the regression residuals. We mention an example of a fundamental factor: the price-to-earnings ratio. We can assign the price-to-earnings ratio of each asset as its exposure to the factor, perhaps with some cross-sectional (i.e., across assets) standardization.}

\paragraph{Methods based on macroeconomic factors} \rev{In this case, we calculate the factor returns first. For example, we can use the returns of style portfolios as the factor returns. We then compute the factor exposure matrix by regressing the historical asset returns on the historical factor returns \cite{hastie2009elements}. We estimate the covariance of the factor returns using a method from \Cref{sec:empirical_cov} and the variances of the idiosyncratic returns similarly from the regression residuals. Note that the estimation of the factor exposure matrix is separable across assets. This is also an empirical approach.}

\paragraph{Methods based on statistical factors} Such methods start by computing an empirical asset return covariance, as described in \Cref{sec:empirical_cov}. The low-rank-plus-diagonal structure is imposed by dimensionality reduction. Effectively, we find the low-rank-plus-diagonal model that optimizes an objective involving the empirical covariance.
We can learn the low-rank component by minimizing its Frobenius norm to the empirical covariance. The solution is related to the top eigenvalues (and corresponding eigenvectors) of the empirical estimate \cite{bai2008large}, \cite[Section 8.2]{johansson2023simple}, \cite{yeon2025beyond}. We can then set the idiosyncratic component of the risk model to match the diagonal entries of the empirical covariance estimate.
This approach tends to underestimate the idiosyncratic risk. 
If we fit the risk model by maximizing Gaussian log-likelihood, then the solution can be obtained via an iterative algorithm called expectation-maximization (EM) \cite{johansson2023simple, rubin1982algorithms, seghouane2008iterative} or an alternating minimization approach \cite{mardia2024multivariate}. 
% We note that the EM algorithm is able to handle missing data in a principled fashion. 
% \rf{A Frobenius-norm fit primarily tries to reproduce the largest entries and dominant eigendirections of the empirical covariance matrix, whereas maximum-likelihood fitting balances errors in all directions, including low-variance ones.} 

\rev{The risk model is time-dependent as the factor exposures, factor return covariance, and idiosyncratic return covariance are all time-dependent quantities. The update frequency of the risk model can vary from daily to, e.g., quarterly, depending on the investor's needs and the nature of the factors involved. The update frequency can be a limiting factor when it comes to the responsiveness of the risk model to changing market conditions. For example, a quarterly update frequency may not be able to capture sudden market regime shifts or transient factors that appear on shorter time scales.}

\paragraph{Our contribution} We propose a method that can extend any given low-rank-plus-diagonal risk model, either fundamental, \rev{macroeconomic}, or statistical. Our algorithm uses the history of observed returns and a weighted Gaussian log-likelihood objective to refine the base model and extend it with additional statistical factors. As such, it is a principled way to build hybrid models with both fundamental-macroeconomic and statistical factors. Spector et al. \cite{spector2024mosaic} investigated how to assess the statistical fit of an existing risk model and proposed a simple method to extend the model by recursively  adding factors. Our work is similarly based on a provided model, but our method both refines the given risk model and adds new statistical factors. 
\rf{Lee et al. \cite{lee2025narrative} also propose extending a risk model by additional factors. However, the asset exposures to these new factors are not estimated statistically, but are instead constructed.  
The new factor exposures are thematic narrative exposures built from news data. They are constructed based on how frequently the narrative appears in news coverage about each asset. Furthermore, the covariance of the new factor returns is estimated following an empirical approach that uses ordinary least squares.
In contrast, our approach relies on statistically learning the additional factors in order to improve the likelihood fit of the observed data. In addition, beyond adding new factors, our approach also refines the covariance of the factor returns of the base model}. 

We leave the algorithmic selection of the number of additional factors as future work. \rev{We note that our approach preserves the factor exposures to the base factors. As such, the methodology is particularly
suitable when the base model consists of fundamental factors. \rf{By choosing the log-likelihood weights based on an EWMA with an appropriate (typically short) half-life, our work can also be regarded as a principled way to find additional} \rf{transient} factors in the returns, which are commonly known as themes \cite{candes2025thematic}.}

\section{Problem statement}\label{sec:problem_form}

% \ejc{I made changes throughout this section.}
\rf{Suppose at time $t$ we are given a sequence of asset return vectors $x_1, \dots, x_T \in \mathbb{R}^n$, potentially with missing entries. We are also provided with the \textit{base} asset return covariance model}
\begin{equation}
  \mathbf{cov}(x_t) \, \approx \, F_1 \Omega_{\mathrm{base}} F_1^\top + D_\mathrm{base},
\end{equation}
where $F_1 \in \mathbb{R}^{n \times n_1}$ is the matrix of factor exposures to the \textit{base} factors, $\Omega_\mathrm{base}$ is the (positive definite) covariance of the \textit{base} factor returns, and $D_\mathrm{base}$ is a diagonal matrix with positive diagonal entries corresponding to the idiosyncratic asset variances. 
In practice, the provided covariance model depends on time $t$, i.e., $F_1$, $\Omega_\mathrm{base}$, and $D_\mathrm{base}$ depend on $t$. For simplicity we hide this dependence.

% \rf{We begin by describing the structure of the risk model to be estimated from the observed sequence of returns. This learned model also depends on time $t$ but we hide this dependence. We will call this the \textit{extended} risk model, because it involves additional learned factors. We then present the objective function used for estimation.}

\subsection{The extended risk model}
\rf{We consider the \textit{extended} asset return covariance model} (this model also depends on time $t$ but we hide this dependence) 
\begin{equation}\label{eq:family1}
    % \hat{\Sigma}\left(F_2, \Sigma_f, D\right) 
    \mathbf{cov}(x_t) \, \approx \,  \begin{bmatrix}
        F_1 & F_2
    \end{bmatrix}\Sigma_f\begin{bmatrix}
        F_1^\top\\
        F_2^\top
    \end{bmatrix} + D,
\end{equation}
where $F_2 \in \mathbb{R}^{n \times n_2}$, $\Sigma_f \in \mathbb{R}^{(n_1+n_2) \times (n_1+n_2)}$, and $D \in \mathbb{R}^{n \times n}$. $D$ is a diagonal matrix with positive diagonal entries and $\Sigma_f$ is a positive definite matrix.
The number of \textit{base} factors is $n_1$ and the number of added factors is $n_2$.
Typically the total number of factors, $n_1 + n_2$, is much smaller than $n$. 
% \rf{Although the risk model $\hat{\Sigma}\left(F_2, \Sigma_f, D\right)$ depends on $t$, we hide the dependence for simplicity in this and the next section.}
We look to find appropriate factor exposures $F_2$, factor return covariance $\Sigma_f$, and idiosyncratic covariance $D$. 

We make a distinction between the factors involved in $F_1$ and $F_2$.
\begin{itemize}
    \item $F_1$ is a known matrix of factor exposures, which have been measured and provided to us. For example, $F_1$ could be the matrix of factor exposures from an MSCI Barra model \cite{msci_barra2023}. However, the covariance of the factor returns involved in $F_1$ will be re-estimated, by learning $\Sigma_f$. The only information from the \textit{base} model that the \textit{extended} model shall rely on without modification is the matrix of factor exposures $F_1$. 
    This is natural if the base model consists of fundamental factors: in such a model exposures are constructed from relatively stable asset characteristics or classification rules, while the covariance of factor returns is computed empirically.
    % This implies that in the base model we trust the factor exposures more than the factor returns (that were used to estimate the factor return covariance). This can happen if, for example, the factor exposures were first determined and then regression was used to compute the factor returns for the base model, \rev{i.e., fundamental factors}. 
    % \ejc{I am not sure I understand the argument which states that $F_1$ is more reliable than $\Omega$ especially in light of what is stated in Section 2. This is a chicken and egg problem. (fixed I think, if not we should just remove)}
    % \rev{This makes it easy to include desired  factors in our risk model. }
    \item \rf{$F_2$ denotes a matrix of unknown factor exposures representing residual factor structure not captured by $F_1$. Estimating $F_2$ and $\Sigma_f$ therefore allows the model to account for additional common sources of variation in asset returns beyond those included in the base model.}
\end{itemize}

\subsubsection{Identifiability of the extended risk model}
Consider first a general factor model and the implied covariance $F \Sigma_f F^\top +D$, where all the terms are unknown. This is different than our setting \eqref{eq:family1}, where $F_1$ is known. In the general model, neither $F$ nor $\Sigma_f$ are identifiable, since for any invertible matrix $R$, setting
\[
\bar{F} = FR,\quad \bar{\Sigma}_f = R^{-1} \Sigma_f R^{-\top}
\]
yields the same covariance.
With the Cholesky decomposition $\Sigma_f = L_f L_f^\top$,  another equivalent form is
\[
\bar{F} = F L_f, \quad \bar{\Sigma}_f = I.
\]
Since the latter is the easiest representation, the standard form for the general model assumes $\Sigma_f = I$.

Our case \eqref{eq:family1} is a little more complex, because $F_1$ is known. We show that the analogous representation for the family of risk models \eqref{eq:family1} is
\begin{equation}\label{eq:family2}
     \begin{bmatrix}
        F_1 & F_2
    \end{bmatrix}\begin{bmatrix}
        \Omega & 0\\
        0 & I
    \end{bmatrix}\begin{bmatrix}
        F_1^\top\\
        F_2^\top
    \end{bmatrix} + D,
\end{equation}
where $\Omega \in \mathbb{R}^{n_1 \times n_1}$ is a positive definite matrix,
$F_2 \in \mathbb{R}^{n \times n_2}$ is arbitrary and $D$ is diagonal. 
Note that $F_2$ is still only identifiable up to a rotation/reflection: if $R$ is any $n_2 \times n_2$ orthogonal matrix, then the representation \eqref{eq:family2} with $F_2$ replaced by $F_2 R$ is equivalent, since $F_2 R R^\top F_2^\top = F_2 F_2^\top$ (a standard situation in factor models).
We prefer the representation given by the family of models \eqref{eq:family2}, because it ensures that the additional factors remain identifiable (up to a rotation/reflection) and allows for a decomposition of risk across the \textit{base} factors $F_1$ and the added factors $F_2$. 
Family \eqref{eq:family2} involves a smaller number of parameters compared to family \eqref{eq:family1}, i.e., family \eqref{eq:family1} is over-parameterized. 

\begin{lemma}
The family of models  given by the right-hand side of \eqref{eq:family1} is the same as the family of models given by \eqref{eq:family2}.    
\end{lemma}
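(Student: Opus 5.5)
We prove the two inclusions between the families of covariance matrices, with $F_1$ fixed. The strategy is a block $LDL^\top$ (Schur complement) factorization of $\Sigma_f$, followed by a block-triangular change of basis that leaves the first $n_1$ columns, namely $F_1$, unchanged.

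\emph{Inclusion of \eqref{eq:family2} in \eqref{eq:family1}.} Given $\Omega \succ 0$, $F_2$ and $D$, we set
\[
\Sigma_f = \begin{bmatrix} \Omega & 0\\ 0 & I\end{bmatrix}.
\]
This matrix is positive definite, since it is block diagonal with positive definite blocks. The two expressions then coincide with the same $F_2$ and $D$.

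\emph{Inclusion of \eqref{eq:family1} in \eqref{eq:family2}.} Take $\Sigma_f \succ 0$ and partition it conformally as
\[
\Sigma_f = \begin{bmatrix} \Sigma_{11} & \Sigma_{12}\\ \Sigma_{21} & \Sigma_{22}\end{bmatrix},
\]
with $\Sigma_{11}\in\mathbb{R}^{n_1\times n_1}$. Positive definiteness of $\Sigma_f$ gives $\Sigma_{22}\succ 0$ and Schur complement $S = \Sigma_{11} - \Sigma_{12}\Sigma_{22}^{-1}\Sigma_{21} \succ 0$. The standard factorization is
\[
\Sigma_f = \begin{bmatrix} I & \Sigma_{12}\Sigma_{22}^{-1}\\ 0 & I\end{bmatrix}
\begin{bmatrix} S & 0\\ 0 & \Sigma_{22}\end{bmatrix}
\begin{bmatrix} I & 0\\ \Sigma_{22}^{-1}\Sigma_{21} & I\end{bmatrix}.
\]
Write $\Sigma_{22} = LL^\top$ with $L$ the Cholesky factor. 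Then
\[
\Sigma_f = M \begin{bmatrix} S & 0\\ 0 & I\end{bmatrix} M^\top,
\qquad
M = \begin{bmatrix} I & \Sigma_{12}\Sigma_{22}^{-1}L\\ 0 & L\end{bmatrix}.
\]
Multiplying out gives
\[
\begin{bmatrix} F_1 & F_2\end{bmatrix} M = \begin{bmatrix} F_1 & \ F_1\Sigma_{12}\Sigma_{22}^{-1}L + F_2 L\end{bmatrix}.
\]
Define $\Omega = S$ and $\tilde F_2 = (F_1\Sigma_{12}\Sigma_{22}^{-1} + F_2)L$, and keep $D$ unchanged. With these choices the right-hand side of \eqref{eq:family1} equals \eqref{eq:family2}, with $\Omega\succ 0$ as required.

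\emph{Main point of difficulty.} The only delicate step is ensuring that the change of basis does not alter $F_1$. A plain Cholesky factorization $\Sigma_f = L_fL_f^\top$, as used in the general case, would mix the columns of $F_1$. Choosing the upper block-triangular $M$, whose top-left block is the identity, avoids this. All remaining verifications are direct block multiplications.
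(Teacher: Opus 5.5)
Your proof is correct and follows the paper's argument. The easy inclusion sets $\Sigma_f$ to the block-diagonal matrix with blocks $\Omega$ and $I$. The converse uses a block upper-triangular change of basis whose top-left block is the identity, so $F_1$ is untouched, and takes $\Omega$ to be the Schur complement $\Sigma_{11} - \Sigma_{12}\Sigma_{22}^{-1}\Sigma_{21}$.

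The only difference is that you factor $\Sigma_{22} = LL^\top$ by Cholesky, whereas the paper uses the symmetric square root $\Sigma_{22}^{1/2}$; the paper's $\Phi$ is exactly your $M$ with $L = \Sigma_{22}^{1/2}$, and any factor with $LL^\top = \Sigma_{22}$ works equally well.
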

\begin{proof}
    Consider a model $\hat{\Sigma}\left(F_2, \Omega, D\right)$ in the family \eqref{eq:family2}. Obviously, this model also belongs to the family of models given by \eqref{eq:family1}. Simply set
    \begin{equation*}
        \Sigma_f = \begin{bmatrix}
        \Omega & 0\\
        0 & I
    \end{bmatrix}.
    \end{equation*}
Now, consider the model $\hat{\Sigma}\left(F_2, \Sigma_f, D\right)$ in the family \eqref{eq:family1} and the decomposition 
\begin{equation*}
    \Sigma_f = \begin{bmatrix}
        \Sigma_{f, 11} & \Sigma_{f, 12} \\
        \Sigma_{f, 21} & \Sigma_{f, 22}
    \end{bmatrix},
\end{equation*}
where $\Sigma_{f, 11} \in \mathbb{R}^{n_1 \times n_1}$ and let 
\begin{equation*}
    \Phi = \begin{bmatrix}
        I & \Sigma_{f, 12} \Sigma_{f, 22}^{-1/2}\\
        0 & \Sigma_{f, 22}^{1/2}
    \end{bmatrix}, \quad \Omega = \Sigma_{f, 11} - \Sigma_{f, 12} \Sigma_{f, 22}^{-1} \Sigma_{f, 21}.
\end{equation*}
The matrix $\Omega$ is positive definite as the Schur complement of $\Sigma_f$. Because $\Sigma_f$ is positive definite, $\Sigma_{f, 22}$ is also positive definite. 

We can write
\begin{equation}
\label{eq:ortho}
\begin{aligned}
    \hat{\Sigma}\left(F_2, \Sigma_f, D\right) &= \begin{bmatrix}
        F_1 & F_2
    \end{bmatrix} \Phi \begin{bmatrix}
        \Omega & 0 \\
        0 & I
    \end{bmatrix}\Phi^\top \begin{bmatrix}
        F_1^\top \\
        F_2^\top
    \end{bmatrix} + D\\
    &= \begin{bmatrix}
        F_1 & F_1 \Sigma_{f, 12} \Sigma_{f, 22}^{-1/2} + F_2 \Sigma_{f, 22}^{1/2}
    \end{bmatrix}
    \begin{bmatrix}
        \Omega & 0 \\
        0 & I
    \end{bmatrix}
    \begin{bmatrix}
        F_1 & F_1 \Sigma_{f, 12} \Sigma_{f, 22}^{-1/2} + F_2 \Sigma_{f, 22}^{1/2}
    \end{bmatrix}^\top + D.
\end{aligned}
\end{equation}
Therefore, the model $\hat{\Sigma}\left(F_2, \Sigma_f, D\right)$ in the family \eqref{eq:family1} can be represented by a model in the family \eqref{eq:family2}.
\end{proof}

From \eqref{eq:ortho}, it becomes evident that the columns of the added factor exposure matrix $F_2$ are not necessarily orthogonal to the columns of $F_1$. There exists another representation that is equivalent to \eqref{eq:family1} and \eqref{eq:family2}, but constrains the new factor exposure matrix $F_2$ to have orthogonal columns to $F_1$. Although not necessary for the purposes of this paper, the proof is given in \Cref{appendix:param}. Lee et al. \cite{lee2025narrative} extend a provided risk model with $F_2$ that has orthogonal columns to $F_1$. In the remainder of the paper, we will consider the representation given by family \eqref{eq:family2} for the \textit{extended} model. 

\subsection{Estimation objective}

% \ejc{I don't like the notation $\tau \in \{1, \ldots, t\}$. I prefer $t \in \{1, \ldots, T\}$. $t$ is used throughout to refer to the current time and as you note it does not need to match the number of samples. I would be hesitant to use it as a rolling index as well when we have $\tau$ available. I made a compromise between the two: $\tau \in 1, \dots, T$. Let me know if it works}

We assume that at each time $\tau=1, \dots, T$ we only observe the returns $x_\tau$ on a subset of assets indexed by the set $\mathcal{O}_\tau \subseteq \{1, \dots, n\}$. Let $\mathcal{M}_\tau = \{1, \dots, n\} \setminus \mathcal{O}_\tau$ be the set of assets with missing returns at time $\tau$. We denote the observed portion of the return at $\tau$ as $x_{\tau, \mathrm{obs}}$ and the missing portion as $x_{\tau, \mathrm{mis}}$.

Let $p_\theta(x)$ be the probability density function of the Gaussian $\mathcal{N}(0, \tF \tOmega \tF^\top +D)$, where
\begin{equation}
    \tF = \begin{bmatrix}
        F_1 & F_2
    \end{bmatrix},\qquad \tOmega = \begin{bmatrix}
        \Omega & 0 \\
        0 & I
    \end{bmatrix},
\end{equation}
and $\mathcal{N}(\mu, \Sigma)$ is the Gaussian distribution with mean $\mu$ and covariance $\Sigma$. 
The covariance of the Gaussian $\mathcal{N}(0, \tF \tOmega \tF^\top +D)$ belongs to the family of \textit{extended} risk models \eqref{eq:family2}.
Set the family of all components to be estimated as $\theta=:(F_2, \Omega, D)$. We estimate $\theta$ by solving
\begin{equation}
    \label{eq:main_log}
    \maximize_\theta \sum_{\tau \leq T} w_\tau \ell(\theta; x_{\tau, \mathrm{obs}}),
\end{equation}
where 
\[
\ell(\theta; x_{\tau, \mathrm{obs}}) = \log p_{\theta}(x_{\tau, \mathrm{obs}}),
\]
and, with a slight abuse of notation, $p_{\theta}(x_{\tau, \mathrm{obs}})$ is the marginal density of the observed returns at time $\tau$ induced by the density function $p_\theta(x)$.
% \ejc{$p_{\theta, \mathrm{obs}, \tau}(\cdot)$ is an ugly piece of notation, I would write it as $p_\theta(x_{\tau,\mathrm{obs}})$ and warn the reader that we are taking a shortcut.}
The weights $w_\tau$ are non-negative and sum to $1$. In our setting, they reflect the assumption that more recent observations are more informative and should therefore receive larger weights. Although we do not write it explicitly, $\Omega$ is positive definite and $D$ is diagonal with positive entries in its diagonal.

Problem \eqref{eq:main_log} maximizes the weighted sum of the Gaussian log-likelihoods of the observed data.
Although the distribution of financial returns may not be Gaussian, 
% maximizing the Gaussian log likelihood is reasonable because financial returns are approximately Gaussian in the bulk of their distribution \cite{johansson2023simple}. 
a Gaussian model is still a reasonable second-order approximation to the cross-sectional structure; in many risk-model applications the main object of interest is the covariance matrix, not the exact tail behavior \cite{johansson2023simple}. The Gaussian assumption also gives a simple tractable framework for estimating and manipulating covariance structure. 
% \ejc{In the end, it's just another loss and is as defensible as any other loss.}
% \ejc{Not sure about this justification. I will need to take a pass. (fixed)}

For most daily or even weekly equity returns, the means are typically tiny compared to the random fluctuations. Therefore, assuming a zero mean often has little practical effect when the goal is to estimate the covariance. Nevertheless, we can account for a non-zero mean, by first estimating the mean return and then subtracting it from the return samples.  

% \ejc{The rest of this section says little and I would delete it.}

% Assuming that there are no missing entries in the sequence of returns $x_1, \dots, x_t$, then problem \eqref{eq:main_log} is equivalent to
% \begin{equation*}
%     \minimize_\theta \mathrm{KL}\left(\mathcal{N}(0, C) \Vert \mathcal{N}(0, \tF \tOmega \tF^\top +D)\right),
% \end{equation*}
% where $\mathrm{KL}(\cdot \Vert \cdot)$ denotes the Kullback–Leibler divergence and 
% \begin{equation}
% \label{eq:Cbad}
%     C = \sum_{\tau \leq t} w_\tau x_\tau x_\tau^\top.
% \end{equation}
% The proof is given in \Cref{appendix:equivalence}. This implies that our method can be thought of as maximizing log-likelihood or equivalently as minimizing the KL divergence with a time weighted empirical covariance matrix. 

% As an aside, we should note that the minimizer of the Gaussian KL objective is equivariant under congruence transformations. Therefore, although not a focus of this paper, we could equivalently work with standardized returns. 

% In \eqref{eq:C} we assume the sample returns $x_1, \dots, x_t$ are zero-mean. \rev{If we believe this is not the case, then we can first estimate the mean and then apply our method to the demeaned samples.}

\section{Solution via expectation-maximization}\label{sec:proposed}
We propose to solve problem \eqref{eq:main_log} using an iterative method called expectation-maximization (EM) \cite{hastie2009elements}. We first discuss the method in general. Then, we show how it applies to the case of fully observed returns. Finally, we discuss how to solve the problem with missing returns.

\paragraph{General description} Consider maximizing 
the objective
\begin{equation}
\ell(\theta) = \sum_{\tau \le T} w_\tau \ell(\theta; y_\tau).
\end{equation}
We have already seen this objective in the previous section in which $y_\tau$ was $x_{\tau,\text{obs}}$. We prefer here to use $y_\tau$ for generality.
In many problems, the log-likelihood is either intractable or leads to difficult optimization. In some cases, e.g.~in our setting, augmenting the data with unobserved latent observations simplifies the log-likelihood and permits maximization. Consider then unobserved latent variables $z_\tau$ and the weighted sum of log-likelihoods of the augmented data: 
\[
\ell^{\text{aug}}(\theta) = \sum_{\tau \le T} w_\tau \ell(\theta; y_\tau, z_\tau),
\]
where
\[
\ell(\theta; y_\tau, z_\tau) = \log p_\theta(y_\tau, z_\tau)
\]
and $p_\theta(y_\tau) = \int_{z_\tau} p_\theta(y_\tau, z_\tau) dz_\tau$.
Starting from an initial guess $\theta_0$, the EM algorithm inductively defines 
\[
Q(\theta \, || \, \theta_k) = \mathbb{E} [\ell^{\text{aug}}(\theta) \mid \{y_\tau\}_{\tau = 1, \ldots, T}, \theta_k],
\]
where the conditional expectation above is over the latent variables $z_\tau \overset{\text{indep.}}{\sim}  p_{\theta_k}(z_\tau | y_\tau)$. This is the E-step. Then, the M-step finds 
\[
\theta_{k+1} = \text{arg max}_\theta \,\, Q(\theta \, || \, \theta_k). 
\]
The point of the data augmentation is that the M-step is now tractable. 
It is well known that the EM produces a monotone sequence in the sense that $\ell(\theta_k)$ is nondecreasing. 

\paragraph{Fully observed returns} In the case where asset returns are fully observed, the observations $y_\tau$ above are the asset returns $x_\tau$. The latent variables are the factor returns so that with the notation above $z_\tau = s_\tau$.
The joint distribution $p_\theta(x_\tau, s_\tau)$ of asset returns and factor returns is 
\[
s_\tau \sim \mathcal{N}(0, \tilde \Omega) \quad \text{and} \quad x_\tau | s_\tau \sim \mathcal{N}(\tilde F s_\tau, D). 
\]
Marginally $x_\tau \sim \mathcal{N}(0, \tF \tOmega \tF^\top +D)$, as is requested. The E-step and M-step have closed-form expressions and their  derivations are in \Cref{appendix:em}. With these expressions, the steps of the EM algorithm are given in \Cref{alg:em}. \rev{In the case that the number of factors in $F_1$ is zero ($n_1=0$), our algorithm reduces to the standard EM algorithm \cite{rubin1982algorithms} for factor models.} \Cref{alg:em} typically converges after $20-30$ iterations. That said, we do wish to clarify some of the steps in \Cref{alg:em} below. 

% \ejc{I don't find this paragraph especially enlightening. (fixed).}
At iteration $k$, the parameter estimate is $\theta_k$. It holds that
\[
s_\tau \mid x_\tau; \theta_k \sim \mathcal{N}(L_k x_\tau, G_k).
\]
$L_k$ and $G_k$ are computed in lines 14-15 using standard multivariate Gaussian calculations. Furthermore,
\[
C_{ss}^k = \sum_{\tau \leq T} w_\tau \mathbb{E} \left[ s_\tau s_\tau^\top \mid x_\tau, \theta_k \right],
\]
\[
C_{xs}^k = \sum_{\tau \leq T} w_\tau \mathbb{E} \left[ x_\tau s_\tau \mid x_\tau, \theta_k \right].
\]
$C_{ss}^k$ is computed in line 15, while $C_{xs}^k$ is computed in line 16. 
Finally, in lines 17-21 we calculate $\theta_{k+1}$.

Particular attention should be given to the initialization $\theta_0=(F_2^0, \Omega_0, D_0)$ in \Cref{alg:em}. Once again, we wish to build the approximation 
\begin{equation}
\label{eq:C}
    C = \sum_{\tau \le T} w_\tau x_\tau x_\tau^\top \, \approx \, \tilde F \tilde \Omega \tilde F^\top + D. 
\end{equation}
The initialization for $F_2$ is based on the singular value decomposition of the residual returns of the base model. This is because the additional factors should explain the variance left unexplained by the base model. The initialization for $D$ preserves the asset volatilities, as given by the empirical covariance $C$. This means that the diagonal entries of the left-hand and right-hand sides in the above display \eqref{eq:C} are the same. 

\paragraph{Partially observed returns} We let $y_\tau$ be the observed returns $x_{\tau, \mathrm{obs}}$ and $z_\tau$ be the factor returns and missing asset returns $(s_\tau, x_{\tau, \mathrm{mis}})$. The joint distribution of $(x_{\tau,\text{obs}}, x_{\tau,\text{mis}}, s_\tau)$ is $p_\theta(x_\tau, s_\tau)$ (up to a permutation of the variables).  The E-step is a little more involved but still has a closed form expression. 
The M-step also has a closed-form expression. The complete analysis is given in \Cref{appendix:em_mis}.

\begin{algorithm}[h!]
\caption{Extending a factor model using EM}\label{alg:em}
\begin{algorithmic}[1]
\State \textbf{Inputs:} sample returns $x_1, \dots, x_T \in \mathbb{R}^n$, base risk model $F_1 \Omega_\mathrm{base} F_1^\top + D_\mathrm{base}$ where $F_1 \in \mathbb{R}^{n \times n_1}$
\State \textbf{Parameters:} weights $w_1, \dots, w_T$ that are non-negative and sum to $1$, number of additional factors $n_2$, number of iterations $N_\mathrm{iterations}$, small positive threshold $\nu >0$
\State Obtain $C$ as in \eqref{eq:C}

\Comment{Initialization}
\State Regress the returns $x_1, \dots, x_T$ on the factor exposure matrix $F_1$ and obtain the residuals $e_1, \dots, e_T \in \mathbb{R}^n$

\State Compute the empirical covariance $\dfrac{1}{T} \sum_{\tau \le T} e_\tau e_\tau^\top$. Take its eigendecomposition. Let $\Lambda \in \mathbb{R}^{n_2 \times n_2}$ be the diagonal matrix of the top $n_2$ eigenvalues of the covariance and $U \in \mathbb{R}^{n \times n_2}$ be the matrix of the corresponding eigenvectors.
\State $k \leftarrow 0$
\State $F_2^k \leftarrow U \Lambda^{1/2}$
\State $\Tilde{F}_k \leftarrow \begin{bmatrix}
    F_1 & F_2^k
\end{bmatrix}$
\State $\Omega_k \leftarrow \Omega_\mathrm{base}$
\State $\Tilde{\Omega}_k = \begin{bmatrix}
    \Omega_k & 0\\
    0 & I
\end{bmatrix}$
\State $D_k = \mathrm{max}\lbrace \mathrm{diag}\left(C - \Tilde{F}_k \Tilde{\Omega}_k \Tilde{F}_k^\top\right), \nu\rbrace$, where $\mathrm{max}$ is elementwise and $\mathrm{diag}(\cdot)$ with a matrix argument returns the diagonal elements
\For{$ k= 1, \dots, N_\mathrm{iterations}$}

\Comment{Expectation-step}
    \State $G_k^{-1} \leftarrow \tF_k^\top D_k^{-1} \tF_k + \tOmega_k^{-1}$
    \State $L_k \leftarrow G_k \tF_k^\top D_k^{-1}$, where $L_k = \begin{bmatrix}
        L_1^k \\
        L_2^k
    \end{bmatrix}$, where $L_1^k \in \mathbb{R}^{n_1 \times n}$, $L_2^k \in \mathbb{R}^{n_2 \times n}$
    \State $C_{ss}^k \leftarrow G_k + L_k C L_k^\top$
    where $C_{ss}^k =\begin{bmatrix}
        C_{ss, 11}^k & C_{ss, 12}^k \\
        C_{ss, 12}^{k,\top} & C_{ss, 22}^k
    \end{bmatrix}$ and $C_{ss, 11}^k \in \mathbb{R}^{n_1 \times n_1}$, $C_{ss, 22}^k \in \mathbb{R}^{n_2 \times n_2}$
    \State $C_{xs}^k \leftarrow C L_k^\top$, where $C_{xs}^k = \begin{bmatrix}
        C L_{1}^{k, \top} & C L_{2}^{k, \top}
    \end{bmatrix} = \begin{bmatrix}
        C_{xs, 1}^{k} & C_{xs, 2}^{k}
    \end{bmatrix}$

    \Comment{Maximization-step}
    \State $\Omega_k \leftarrow C_{ss, 11}^k$
    \State $\Tilde{\Omega}_k = \begin{bmatrix}
    \Omega_k & 0\\
    0 & I
\end{bmatrix}$
    \State $F_2^k \leftarrow \left( C_{xs, 2}^k - F_1 C_{ss,12}^k \right)  C_{ss, 22}^{k, -1}$
    \State $\Tilde{F}_k \leftarrow \begin{bmatrix} F_1 & F_2^k \end{bmatrix}$
    \State $D_k \leftarrow \diag \left(C - 2 C_{xs}^k \Tilde{F}_k^\top + \Tilde{F}_k C_{ss}^k \Tilde{F}_k^\top \right)$
\EndFor
\State \Return $F_2^k, \Omega_k, D_k$
\end{algorithmic}
\end{algorithm}

\section{Empirical performance}
\label{sec:resultss}

We consider the time period between 2018-06-27 and 2023-12-28. This period contains the COVID-19 pandemic, a time of high market volatility. Our universe consists of the US large-capitalization equities in the BlackRock Systematic investment universe for which we observe complete return data over the period from 2018-06-27 to 2023-12-28.
We restrict attention to assets with uninterrupted return histories and for simplicity do not consider the version of our approach that explicitly accommodates missing returns in evaluating performance. As a result, the reported performance may not be representative of the performance on the full evolving asset universe. Nevertheless, it allows us to focus on the core mechanism by which our approach refines a general risk model to a targeted investment universe. This restriction does not necessarily favor the extended model over the base model.
% \rf{Since our analysis concerns statistical fit rather than implementable trading strategies, the resulting selection effect is less consequential than it would be in a backtest of realized portfolio performance.} 
% \ejc{Why is it a problem for a variance estimation problem? (fixed)}

Our universe consists of 870 assets for which we have access to daily returns. \rf{Every day $t$, we can observe the history of daily returns $x_1, \dots, x_t$}. The return $x_\tau$ corresponds to the time period between day $\tau-1$ and day $\tau$.  The low-rank-plus-diagonal covariance estimate at time $t$ is denoted
\begin{equation}\label{eq:risk_notation}
    \hat{\Sigma}_t = \hat{F}_t \hat{F}_t^\top + \hat{D}_t.
\end{equation}
Note that we hide the factor return covariance (specifically its square root) in $\hat{F}_t$. Below, $\hat{\Sigma}_t$ will correspond to either: 
\begin{itemize}
    \item the base model at time $t$, in which case $\hat{F}_t = F_{1,t} \Omega_{t, \mathrm{base}}^{1/2}$ ($F_{1, t}$ is the provided factor exposure matrix at time $t$) and $\hat{D}_t = D_{t, \mathrm{base}}$ ($D_{t, \mathrm{base}}$ is the provided diagonal matrix of idiosyncratic variances at time $t$). 
    \item the extended model at time $t$, in which case $\hat{F}_t = \begin{bmatrix}
        F_{1,t} & \hat{F}_{2,t}
    \end{bmatrix} \begin{bmatrix}
        \hat{\Omega}_t^{1/2} & 0 \\
        0 & I
    \end{bmatrix}$, where $\hat{F}_{2,t}$, $\hat{\Omega}_t$, and $\hat{D}_t$ are obtained by \Cref{alg:em} using the observed returns at time $t$, i.e., $x_1, \dots, x_t$.
    % \item \ejc{Not entirely convinced about this.} an \textit{exposure-misaligned} extended risk model at time $t$, in which case $\hat{F}_t = \begin{bmatrix}
    %     F_{1,t} & \Pi \hat{F}_{2,t}
    % \end{bmatrix} \begin{bmatrix}
    %     \hat{\Omega}_t^{1/2} & 0 \\
    %     0 & I
    % \end{bmatrix}$, where $\hat{F}_{2,t}$, $\hat{\Omega}_t$, and $\hat{D}_t$ are obtained by \Cref{alg:em} and $\Pi$ is a random permutation matrix. This model serves as a baseline to show that uncovering hidden structure in the asset returns is not a trivial problem, i.e., it is a control or safety guard. $\Pi \hat{F}_{2,t}$ assigns the exposures to the additional factors of one asset to another asset, destroying any meaningful structure that was learned by \Cref{alg:em}. We call this baseline the \textit{extended (permuted)} model below.
\end{itemize}

% MK: is the 53 proprietary? If so, maybe say "several dozen" instead of 53? Good question for Ron.
Our base model is the Barra short-term US risk model, which is updated monthly. The base model consists of $73$ factors for the universe of 870 assets. The $73$ factors include sectors, such as aerodefense, airlines, and banks, where the exposure is binary ($0$ or $1$), and factors where the exposures are not binary, e.g., the beta factor. We extend the base model with an additional $7$ factors. This number is considered a good choice overall, as it allows the extended model to learn information missed by the base model, but also mitigates the risk of overfitting. We update the extended model on a daily basis. We use an EWMA with a half-life of $H=126$ days for the weights $w_\tau$ on each day $t$.
Our ability to pick up transient factors depends on the choice of the half-life parameter. A short half-life will allow us to pick up short-lived, rapidly changing effects, but if it is too short it may primarily capture noise rather than meaningful signal. A longer half-life will emphasize more persistent, slowly evolving \rf{correlations}, but may \rf{be biased}.
We do not optimize over the choice of the half-life or the number of additional factors in this paper.

We compare the two models--the base and the extended models--with respect to their statistical fit out-of-sample. For this purpose we consider
several metrics that correspond to separate subsections below.
We evaluate the models in the time period between 2019-06-26 and 2023-12-28. The days before 2019-06-26 are used as the burn-in period for the extended model.

Throughout, we assume that daily returns are zero-mean. \rf{This approximation is reasonable for most stocks, because the mean returns are small compared to the random fluctuations. For example, the S\&P 500 has a daily mean return of 4 basis points (bps) and a daily volatility of 100 bps.}

\subsection{Out-of-sample $R^2$}

% If the extended model uncovers covariance structure that the base model misses, then it should be able to better estimate out-of-sample returns.
% We quantify this in \Cref{sec:r2_ret} by measuring the out-of-sample $R^2$. A larger $R^2$ implies a better statistical fit. A negative $R^2$ implies that the model is worse than simply predicting zero. 

% If the extended model uncovers covariance structure that the base model misses, then it should be able to better estimate out-of-sample returns. 
% % In other words, the extended model
% % should explain more of the out-of-sample returns variance. 
% We measure this by computing the time and asset out-of-sample $R^2$. \ejc{What does the last sentence mean?}
% We analyze the $R^2$ obtained from estimating returns and the $R^2$ obtained from estimating the residuals relative to the base factors, recognizing that these quantities are closely related. A higher $R^2$ implies a better statistical fit. A negative $R^2$ implies that the model is worse than simply predicting zero. \ejc{This paragraph is not clear. I can't understand it even though I think I proposed the method.}

\subsubsection{Estimating returns}\label{sec:r2_ret}

If the extended model uncovers covariance structure that the base model misses, then it should be able to better estimate out-of-sample returns.
We quantify this by measuring the out-of-sample $R^2$. A larger $R^2$ implies a better statistical fit. A negative $R^2$ implies that the model is worse than simply predicting zero.

At time $t$, we split the $n$ assets in two disjoint groups: $\mathrm{train}$ and $\mathrm{test}$ (we suppress the dependence on $t$ for these two groups for clarity). 
We use the next-day returns for the assets in the train set to get the most likely factor returns according to our model, by solving
% We solve the out-of-sample maximum likelihood problem tied with the risk model \eqref{eq:risk_notation}, but limited
% to the set of assets $G_\mathrm{train}$:
\begin{equation}\label{eq:ltsq_r^2}
    \minimize_{s_t} \left(x_{t+1}^\mathrm{train} - \hat{F}_t^\mathrm{train} s_t \right)^\top \hat{D}_t^{\mathrm{train}, -1} \left( x_{t+1}^\mathrm{train} - \hat{F}_t^\mathrm{train} s_t \right) + s_t^\top s_t,
    % \lVert r_{t+1} - \hat{F}_t s_t \rVert_2^2
\end{equation}
where the superscript $\mathrm{train}$ indicates that we only consider the assets in the $\mathrm{train}$ set. We maximize the conditional density of the factor returns given the train returns. 
% \rev{In other words, our estimate \rf{$s_t^\star$} of the factor returns is the maximum likelihood estimate under our factor model.} 
We then predict the next-day (out-of-sample) returns for the assets in the $\mathrm{test}$ set as
\begin{equation}
    \hat{x}_{t+1}^\mathrm{test} = \hat{F}_t^\mathrm{test} \hat{s}_t,
\end{equation}
where $\hat{s}_t$ is the solution of \eqref{eq:ltsq_r^2}.  \rev{This is the return prediction under our factor model.} Finally, we compute the $R^2$ between the predicted next-day returns and the actual next-day returns for the assets in the $\mathrm{test}$ set:
\begin{equation}
    R^2_t = 1 - \dfrac{\lVert x_{t+1}^\mathrm{test} - \hat{x}_{t+1}^\mathrm{test} \rVert_2^2}{\lVert x_{t+1}^\mathrm{test} - 0\rVert_2^2}.
\end{equation}
% where
% \begin{equation}
%     \mathrm{Avg}(x_{t+1}^\mathrm{test} )  = \dfrac{1}{\lvert G_\mathrm{test} \rvert} \sum_{i \in G_\mathrm{test}} x_{t+1}^i,
% \end{equation}
% and $x_{t+1}^i$ is the return of asset $i$ at time $t+1$.
\rev{Using zero as the mean is reasonable for daily asset returns \rf{because the mean is typically much smaller than the random fluctuations}.}
If the extended risk model achieves a higher $R^2$ than the base model, this suggests that it exploits missing correlations to make better predictions from available returns.
% A higher $R^2$ implies that the risk model captures more of the structure present in the returns, as it is able to better estimate out-of-sample returns.
We compute $R^2$ every day in the evaluation period.

% For each of $30$ replications, at each time $t$ we randomly split the assets into a test and train set ($90\%$ of the assets are in the train set and the remaining $10\%$ is in the test set at each time) and compute the corresponding out-of-sample return $R^2_t$. We then average $R^2_t$ across replications at each $t$ to obtain a cross-replication mean time series. 

We include a third risk model in the results of this section as a baseline. We call this model \textit{randomly extended}. We construct this risk model as follows. Let $F_2' \in \mathbb{R}^{n \times 7}$ be a matrix of factor exposures sampled from $\mathcal{N}(0,1)$. This risk model is then
\[
\begin{bmatrix}
    F_{1, t} & F_2'
\end{bmatrix} \Omega_t'
\begin{bmatrix}
    F_{1, t} & F_2'
\end{bmatrix}^\top + D_t',
\]
where we learn $\Omega_t'$ and $D_t'$ using \Cref{alg:em}: we run \Cref{alg:em} using $n_2 = 0$ with the input $F_1$ being $\begin{bmatrix}
    F_{1, t} & F_2'
\end{bmatrix}$. Essentially, we augment the base model with a random factor exposure matrix and re-learn the factor return covariance and the idiosyncratic variances to maximize the weighted log-likelihood. 
The added factor subspace for this model is random and for the small number of added factors we consider it is not expected to capture meaningful structure in the returns.

The time series of $R_t^2$ is shown in \Cref{fig:return_r2}. The extended model outperforms the base model. It is also important to note that the extended model, where the 7 additional factors are learned from the returns data, outperforms the randomly extended model. We therefore conclude that the extended model actually captures covariance structure that the base model misses. The randomly extended model slightly underperforms the base model because it overfits to the assets in the train set due to the random added factors. This emphasizes what should be obvious: adding random factors does not help in estimating out-of-sample returns. We do not show the randomly extended model in future sections. 

Finally we note that the base model also captures a significant amount of structure in the returns. The $90/10$ split is chosen to ensure that the train set is \rf{significantly large} to obtain meaningful factor return estimates, while the test set is large enough to obtain meaningful out-of-sample $R^2$ estimates. We have verified that other splits (e.g., $50/50$) lead to similar conclusions.

% The extended (permuted) model performs significantly worse: assigning the added factors' exposures of one asset to another washes out the covariance structure uncovered by \Cref{alg:em}. This indicates that uncovering hidden structure in the asset returns is not a trivial problem. The extended (permuted) model is significantly worse at the end of 2023. This is consistent with a significant regime shift that occurred in the US equity market during the final quarter of 2023. The transition was from a narrow market led only by a few technology giants to a more broad-based rally. In this period, misaligning exposures across assets leads to a significant degradation in performance.

For each risk model we report a point estimate and a dispersion for the $R^2$ in \Cref{tab:return_r2}. 
% The reported point estimate corresponds to the time average of the cross-replication mean. The reported dispersion measure is the standard deviation of the average across time of the cross-replication mean. These results are included in \Cref{tab:return_r2}. 
The extended model offers an improvement in $R^2$. It allows for better return forecasts of $10\%$ of the returns given the remaining $90\%$ of the returns.

\begin{figure}[]
    \centering
    \includegraphics[width=0.7\linewidth]{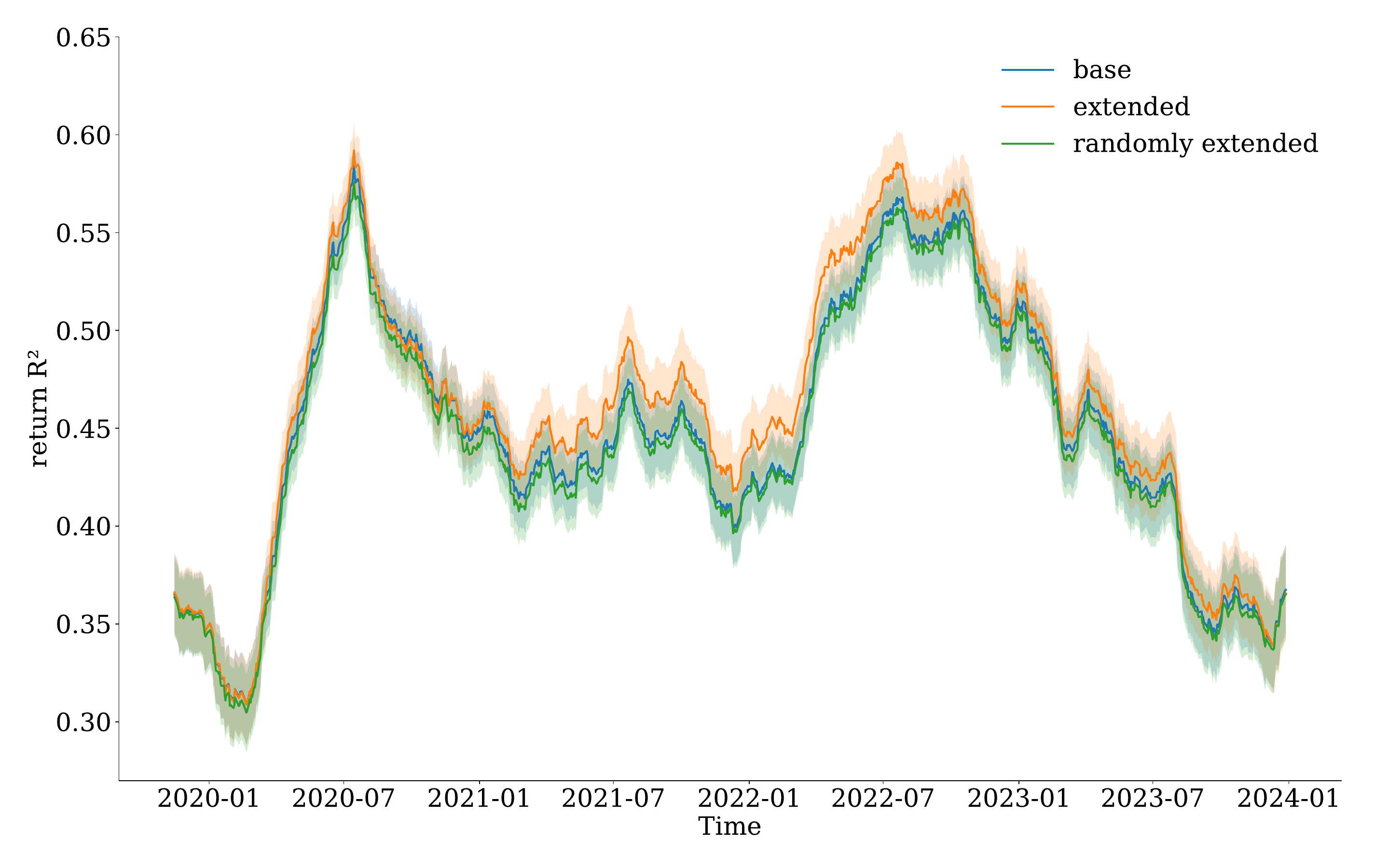}
    \caption{Out-of-sample return predictability ($R^2$) across time. At each of $30$ replications and each day, assets are randomly split into train and test sets (90/10), and the out-of-sample (next-day for test assets) return $R^2$ is computed. Lines report the rolling mean (window = 100 days) of the cross-replication average $R^2$. The shaded bands indicate the $\pm 1$ rolling mean of the standard deviation of the average cross-replication $R^2$ (window = 100 days), providing a smoothed measure of sampling variability across random splits.}
    \label{fig:return_r2}
\end{figure}

\begin{table}[ht]
\centering
\begin{tabular}{lcc}
\toprule
Model
& \makecell{Point estimate} &
\makecell{Dispersion} \\
\midrule
Base     & 0.445  & 0.0058 \\
Extended & \textbf{0.454} & 0.0059 \\
Randomly extended & 0.439 & 0.0058 \\
\bottomrule
\end{tabular}
\caption{Results for the out-of-sample return $R^2$ using a $90/10$ train/test split. The point estimate is the average across time of the cross-replication mean. The dispersion is the standard deviation of the average across time of the cross-replication mean.}
\label{tab:return_r2}
\end{table}

\subsubsection{Estimating residuals with respect to the factors of the base model}\label{sec:r2_res}
The extended model assumes that there is additional structure in the returns beyond the factors included in the base model. If this additional structure exists and the extended model is able to learn it, then it should be able to predict out-of-sample residuals with respect to the factors of the base model using the additional learned factors.
This methodology is similar, but not identical, to the measure of out-of-sample performance proposed by Spector et al. \cite{spector2024mosaic}.
% We repeat the same procedure as above, but now we compute $R^2$ for the estimation of the residuals with respect to the base factors \ejc{What's this?}, which we also call the \textit{base} residuals. We check whether the extended model is able to capture structure in the out-of-sample base residuals that is missed by the base model. This methodology is similar, but not identical, to the measure of out-of-sample performance proposed by Spector et al. \cite{spector2024mosaic}.

We only consider the extended model in this section. Recall that in this case
$\hat{F}_t = \begin{bmatrix}
        \hat{F}_{1,t} & \hat{F}_{2,t}
    \end{bmatrix}$, where $\hat{F}_{1,t}=F_{1, t} \hat{\Omega}_t^{1/2} $ corresponds to the $n_1$ factors also present in the base model and $\hat{F}_{2,t}$ corresponds to the additional $n_2$ factors.
The underlying return model is therefore
\[
x_t = \hat{F}_{1,t} s_{1,t} + \hat{F}_{2,t} s_{2,t} + \hat{D}_t^{1/2} \epsilon_t,
\]
where $s_{1,t}$, $s_{2,t}$, and $\epsilon_t$ are all decorrelated. 
We refer to the $n_1$ factors involved in $\hat{F}_{1,t}$, $s_{1,t}$, as the \textit{base factors}. 

At each time $t$, we split the $n$ assets in two disjoint groups: $\mathrm{train}$ and $\mathrm{test}$ (again we suppress the dependence on $t$).
We use the next-day returns for the assets in the train set to compute the factor returns for the $n_2$ added factors. 
We then use these returns to estimate the residuals with respect to the base factors for the assets in the test set.
% \ejc{I feel presentation is a little complicated here. Is there a way to simplify it? (fixed)}

In details, we first use the assets in the train set to get the most likely base factor returns, by solving
\begin{equation}\label{eq:ltsq_r^2_residuals}
    \begin{aligned}
\minimize_{s_{1,t}} \left(x_{t+1}^\mathrm{train} - \hat{F}_{1,t}^{\mathrm{train}} s_{1,t} \right)^\top \left( \hat{F}_{2,t}^{\mathrm{train}} \hat{F}_{2,t}^{ \mathrm{train}, \top } + \hat{D}_t^{\mathrm{train}} \right)^{-1} &\left( x_{t+1}^\mathrm{train} - \hat{F}_{1,t}^{\mathrm{train}} s_t^{(1)}  \right) + s_{1,t}^{\top} s_{1,t}.
% \lVert r_{t+1} - \hat{F}_t s_t \rVert_2^2
    \end{aligned}
\end{equation}
% where the superscript $\mathrm{train}$ indicates that we only consider the assets in $G_\mathrm{train}$. 
The residuals with respect to the base factors for the train assets are then:
% The $G_\mathrm{train}$ residuals with respect to the base factors are then:
\begin{equation}
    \label{eq:train_res}
    \epsilon_{t+1}^\mathrm{train} = x_{t+1}^\mathrm{train} - \hat{F}_{1,t}^{ \mathrm{train}} {\hat{s}_{1,t}^\mathrm{train}},
\end{equation}
where $\hat{s}_{1,t}^\mathrm{train}$ is the solution of \eqref{eq:ltsq_r^2_residuals}.
Problem \eqref{eq:ltsq_r^2_residuals} is appropriate because \rf{it maximizes the conditional density of the base factor returns, $s_{1,t}$, given the returns in train}. 

We then use the train assets to obtain the most likely returns for the $n_2$ added factors by solving
\begin{equation}
\label{eq:ltsq_add}
    \minimize_{s_{2,t}} \left(\epsilon_{t+1}^\mathrm{train} - \hat{F}_{2,t}^{\mathrm{train}} s_{2,t} \right)^\top \hat{D}_t^{\mathrm{train}, -1} \left( \epsilon_{t+1}^\mathrm{train} - \hat{F}_{2,t}^{ \mathrm{train}} s_{2,t} \right) + s_{2,t}^{\top} s_{2,t},
\end{equation}
with solution $\hat{s}_{2,t}$. Problem \eqref{eq:ltsq_add} maximizes the conditional density of the added factor returns, $s_{2,t}$, given that
\[
\hat{F}_{2,t}^\mathrm{train} s_{2,t} + \hat{D}_t^{\mathrm{train}, 1/2} \epsilon_t =  \epsilon_{t+1}^\mathrm{train}.
\]

Similarly, we compute the residuals with respect to the base factors for the test assets by solving
\begin{equation}\label{eq:ltsq_r^2_residuals_test}
    \begin{aligned}
\minimize_{s_{1,t}} \left(x_{t+1}^\mathrm{test} - \hat{F}_{1,t}^{ \mathrm{test}} s_{1,t} \right)^\top \left( \hat{F}_{2,t}^{\mathrm{test}} \hat{F}_{2,t}^{\mathrm{test}, \top }+\hat{D}_t^{\mathrm{test}} \right)^{-1} &\left( x_{t+1}^\mathrm{test} - \hat{F}_{1,t}^{\mathrm{test}} s_{1,t}  \right) + s_{1,t}^{ \top} s_{1,t} ,
% \lVert r_{t+1} - \hat{F}_t s_t \rVert_2^2
    \end{aligned}
\end{equation}
and setting
\begin{equation}
    \label{eq:test_res}
    \epsilon_{t+1}^\mathrm{test} = x_{t+1}^\mathrm{test} - \hat{F}_{1,t}^{\mathrm{test}} {\hat{s}_{1,t}^\mathrm{test}},
\end{equation}
where ${\hat{s}_{1,t}^\mathrm{test}}$ is the solution of \eqref{eq:ltsq_r^2_residuals_test}.

Finally, we compute the predicted residuals with respect to the base factors for the test assets as
\begin{equation}
    \hat{\epsilon}_{t+1}^\mathrm{test} = \hat{F}_{2,t}^{\mathrm{test}} {\hat{s}_{2,t}},
\end{equation}
and obtain the $R^2$
\begin{equation}
    \label{eq:r2_res_ext}
    R^2_t = 1 - \dfrac{\lVert \epsilon_{t+1}^\mathrm{test} - \hat{\epsilon}_{t+1}^\mathrm{test} \rVert_2^2}{\lVert \epsilon_{t+1}^\mathrm{test} -  {0}\rVert_2^2}.
\end{equation}
% where
% \begin{equation}
%     \mathrm{Avg}(\epsilon_{t+1}^\mathrm{test} )  = \dfrac{1}{\lvert G_\mathrm{test} \rvert} \sum_{i \in G_\mathrm{test}} \epsilon_{t+1}^i.
% \end{equation}
A positive $R^2$ here indicates that the added factors are able to explain existing structure in the out-of-sample residuals with respect to the base factors.
\rev{We note that $\hat{\epsilon}_{t+1}^\mathrm{test}$ only depends on the next day's returns of the train assets, while $\epsilon_{t+1}^\mathrm{test} $ only depends on the next day's returns of the test assets. Our ability to predict the latter using the former depends on the quality of the risk model.}

The $R^2$ is plotted in \Cref{fig:residual_r2}. The extended model achieves a positive $R^2$ which indicates that it is in fact learning useful structure in the returns. The added factors allow the model to predict out-of-sample residuals with respect to the base factors. The average across time cross-replication mean $R^2$ is $0.125$.

\begin{figure}[h]
    \centering
    \includegraphics[width=0.7\linewidth]{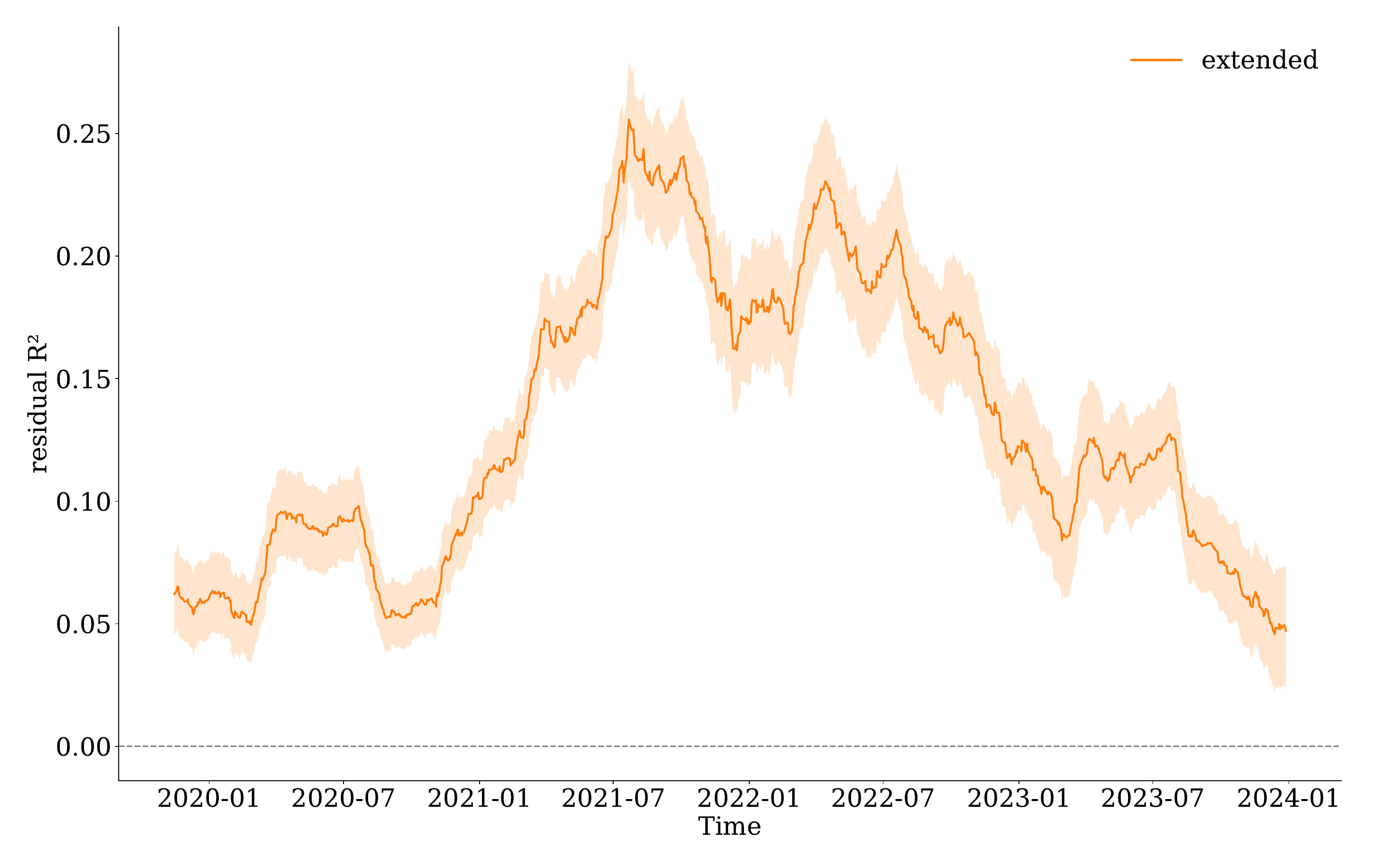}
    \caption{
    Out-of-sample predictability ($R^2$) of the residuals with respect to the base model's factors based on the added learned factors for the extended model across time. At each of $30$ replications and each day, assets are randomly split into train and test sets (90/10), and the out-of-sample (next-day for test assets) $R^2$ for the residuals is computed. Lines report the rolling mean (window = 100 days) of the cross-replication average $R^2$. The shaded bands indicate the $\pm 1$ rolling mean of the standard deviation of the average cross-replication $R^2$ (window = 100 days), providing a smoothed measure of sampling variability across random splits.}
    \label{fig:residual_r2}
\end{figure}

% \begin{table}[ht]
% \centering
% \begin{tabular}{lcc}
% \toprule
% Model
% & \makecell{Point estimate} &
% \makecell{Dispersion} \\
% \midrule
% Base     & 0.0  & 0.0 \\
% Extended & \textbf{0.125} & 0.005 \\
% Extended (permuted) & $-$0.013 & 0.001 \\
% \bottomrule
% \end{tabular}
% \caption{\rf{Results for the out-of-sample residual $R^2$ using a $90/10$ train/test split. The point estimate is the time average of the cross-replication mean. The dispersion is the standard deviation of the average across time of the cross-replication mean. The extended model allows for better residual forecasts of $10\%$ of the returns given the remaining $90\%$}.}
% \label{tab:return_r2}
% \end{table}

\subsection{Predictive ability of added factors}

We demonstrate in another way that the added factors of the extended model 
\[
\begin{bmatrix}
    \hat{F}_{1, t} & \hat{F}_{2, t}
\end{bmatrix}\begin{bmatrix}
    \hat{F}_{1, t} & \hat{F}_{2, t}
\end{bmatrix}^\top + \hat{D}_t
\]
capture structure present in the returns that the base factors miss. 
% Similarly to the last section, we consider the extended risk model $\hat{F}_t = \begin{bmatrix}
%         \hat{F}_{1,t} & \hat{F}_{2,t}
%     \end{bmatrix}$, where $\hat{F}_{1,t}=F_{1, t} \hat{\Omega}_t^{1/2} $ corresponds to the $n_1$ base factors and $\hat{F}_{2,t}$ corresponds to the added $n_2$ factors.
We consider the null hypothesis 
% \ejc{What's Gaussian here? I don't get it.}
\[
x_t = \hat{F}_t^{(1)} s_t^{(1)} +\Tilde{\epsilon}_t,\quad (H_0)
\]
where $\Tilde{\epsilon}_t$ and $s_t^{(1)}$ are independent, and the elements of $\Tilde{\epsilon}_t$ are independent. In other words, we assume there is no added structure, other than the base factors. We will show that our predictive ability is not consistent with this hypothesis. 

% \ejc{We do not need to repeat all the equations. Just say 
% \begin{equation*}
%     \minimize_{s_t} \lVert x_{t+1}^\mathrm{fold} - \hat{F}_t^{(1), \mathrm{fold}} s_t \rVert_2^2 .
%     % \lVert r_{t+1} - \hat{F}_t s_t \rVert_2^2
% \end{equation*}
% where fold is either rain or test. Also only label equations you recall. (fixed)}

We split the $n$ assets in two disjoint groups: $\mathrm{train}$ and $\mathrm{test}$ (again we suppress the dependence on $t$). For each group, we solve the out-of-sample ordinary least squares problem
\begin{equation}\label{eq:ltsq_ee}
    \minimize_{s_{1,t}} \lVert x_{t+1}^\mathrm{group} - \hat{F}_{1,t}^{\mathrm{group}} s_{1,t} \rVert_2^2 ,
    % \lVert r_{t+1} - \hat{F}_t s_t \rVert_2^2
\end{equation}
where $\mathrm{group}$ is either train or test. The solution of \eqref{eq:ltsq_ee} is $\hat{s}_{1,t}^{\mathrm{group}}$ for each group.
We compute the optimal residuals 
\[
    \hat{\delta}_t^\mathrm{group} = x_{t+1}^\mathrm{group} - \hat{F}_{1,t}^{\mathrm{group}} {\hat{s}_{1,t}}^{\mathrm{group}}.
\]
Under the null $H_0$, it holds that $\hat{\delta}_t^\mathrm{train}$ and $\hat{\delta}_t^\mathrm{test}$ are independent. 
\rf{In other words, under the null, the residuals $\hat{\delta}_t^\mathrm{train}$ are not predictive of the residuals $\hat{\delta}_t^\mathrm{test}$: the highest possible $R^2$ value is $0$.} 
Now let 
\[
    P_t^\mathrm{group} = \left( I - \hat{F}_{1,t}^{ \mathrm{group}} (\hat{F}_{1,t}^{\mathrm{group}, \top } \hat{F}_{1,t}^{\mathrm{group}})^{-1} \hat{F}_{1,t}^{\mathrm{group}, \top} \right) \hat{F}_{2,t}^{ \mathrm{group}}
\]
for each of the two groups: train or test.
Under the alternative hypothesis that the extended risk model holds, i.e., the additional factors exist and
\[
    x_t = \hat{F}_{1,t} s_{1,t} + \hat{F}_{2,t} s_{2, t} + \hat{D}_t^{1/2}\epsilon_t,\quad (H_1),
\]
the in-sample residuals of the train and test assets with respect to the base factors (i.e., after solving problems \eqref{eq:ltsq_ee} where we replace $x_{t+1}$ with $x_t$) have factor exposures to $s_{2,t}$ equal to $P_t^\mathrm{train}$ and $P_t^\mathrm{test}$, respectively. Therefore, to predict $s_{2,t}$ using $\hat{\delta}^\mathrm{train}$, we solve
\begin{equation}\label{eq:ltsq_eeee}
    \minimize_{s_{2,t}} \lVert \hat{\delta}_{t}^\mathrm{train} - P_t^\mathrm{train } s_{2,t} \rVert_2^2 
    % \lVert r_{t+1} - \hat{F}_t s_t \rVert_2^2
\end{equation}
and obtain the solution $\hat{s}_{2,t}$.
We then make the prediction 
\[
    \Tilde{\delta}_t^\mathrm{test} = P_t^\mathrm{test} \hat{s}_{2,t}
\]
for $\hat{\delta}_t^\mathrm{test}$. Under the null, the best prediction would be $0$ for $\hat{\delta}_t^\mathrm{test}$ (assume $\hat{F}_{2,t}=0)$ and the best $R^2$ would also be $0$. If the $R^2$
\begin{equation}
\label{eq:r2_res_exttt}
        R^2_t = 1 - \dfrac{\lVert \hat{\delta}_t^\mathrm{test} - \Tilde{\delta}_t^\mathrm{test}\rVert_2^2}{\lVert \hat{\delta}_t^\mathrm{test} -  0\rVert_2^2}
\end{equation}
is positive, then we are able to predict $\hat{\delta}_t^\mathrm{test}$ from $\hat{\delta}_t^\mathrm{train}$, which should not be possible under the null. Therefore, this would be evidence against the null and in favor of the discovered added factors. This methodology is inspired by Spector et al. \cite{spector2024mosaic}. 

% \ejc{This section is also unnecessarily complicated. I would streamline. Also, comments from other sections apply. (fixed)}

The time-series of $R^2$ is plotted in \Cref{fig:mosaic}. 
The predictive ability implied by the positive out-of-sample $R^2$ for the extended model is evidence that the added factors are present (equivalently evidence against the null). The average across time of the cross-replication mean $R^2$ is $0.0129$. The time periods when the extended model achieves the largest return $R^2$ gap to the base model (as shown in \Cref{fig:return_r2}) coincide with the periods of largest $R^2$ in \Cref{fig:mosaic}. The performance degrades in late 2023. This is a period when the extended model does not improve upon the base model, as shown in \Cref{fig:return_r2}.
% The extended (permuted) model obtains an $R^2$ that is negative. This implies that a wrong exposure attribution for the added factors does not allow us to reject the null hypothesis of no added factors with this model.

% For each risk model we report a point estimate and a dispersion for the $R^2$. The reported point estimate corresponds to the time average of the cross-replication mean. The reported dispersion measure is the standard deviation of the average across time of the cross-replication mean. These results are included in \Cref{tab:return_r2}. The extended model offers an improvement in $R^2$.

\begin{figure}[H]
    \centering
    \includegraphics[width=0.7\linewidth]{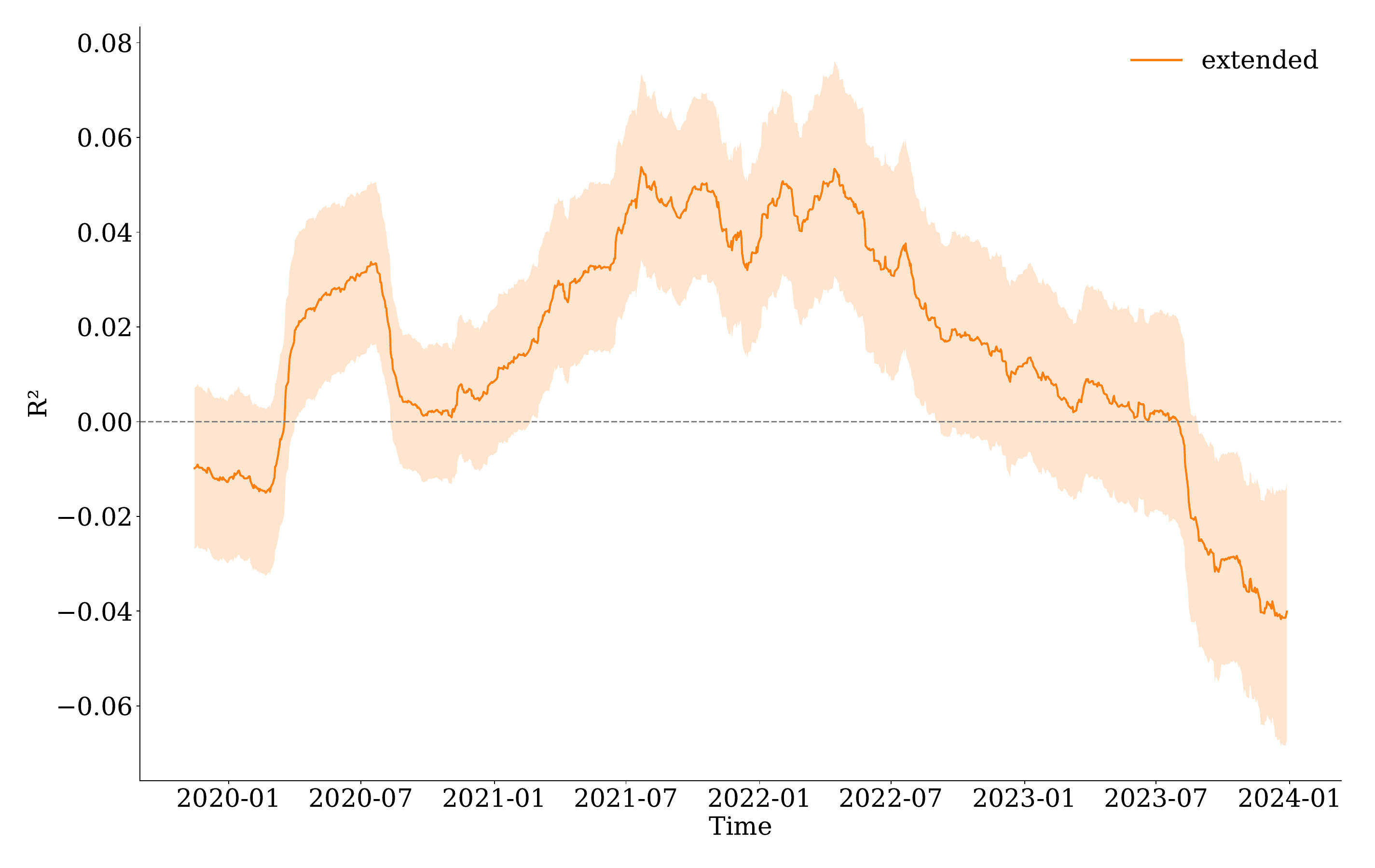}
    \caption{Evidence against the null `no added factors' ($R^2$) for the extended risk model. At each of $30$ replications and at each time, assets are randomly split into train and test sets (90/10), and the out-of-sample $R^2$ \eqref{eq:r2_res_exttt} is computed for each day. Lines report the rolling mean (window = 100 days) of the cross-replication average $R^2$. The shaded bands indicate the $\pm 1$ rolling mean of the standard deviation of the average cross-replication $R^2$ (window = 100 days), providing a smoothed measure of sampling variability across random splits.}
    \label{fig:mosaic}
\end{figure}

% \begin{table}[ht]
% \centering
% \begin{tabular}{lcc}
% \toprule
% Model
% & \makecell{Point estimate} &
% \makecell{Dispersion} \\
% \midrule
% Extended & \textbf{0.0125} & 0.002 \\
% Extended (permuted) & $-$0.0168 & 0.0005 \\
% \bottomrule
% \end{tabular}
% \caption{Evidence ($R^2$) against the null `no added factors'. The reported point estimate corresponds to the time average of the cross-replication mean. The reported dispersion measure is the standard deviation of the average across time of the cross-replication mean. A positive point estimate indicates the existence of added factors in the returns. }
% \label{tab:mosaic}
% \end{table}

\subsection{Log likelihood and regret}

% \ejc{I do not know that this should be the first section. If returns are not Gaussian, people may not be impressed with this. (fixed)}

% MK: should we be dropping additive and multiplicative constants here as well? should we use \ell? I opted against dropping them here and in eq(7). (Fixed)
\rf{Every} day $t$, we compute the normalized log-likelihood \rf{(i.e., log-likelihood divided by the number of assets)} of the next return, $x_{t+1}$, under our current model\rev{\begin{equation}
    \ell(\hat{\Sigma}_t, x_{t+1}) = -\dfrac{1}{2n} \left( n \log (2\pi) + \log \det \hat{\Sigma}_t + x_{t+1}^T \hat{\Sigma}_t^{-1} x_{t+1}\right).
\end{equation}}

% This metric is based on the assumption that the daily asset returns are Gaussian, which, as we saw earlier, is a reasonable assumption. Furthermore, our method for extending a base risk model is derived under Gaussian assumptions. A large value of log likelihood indicates a good statistical fit and suggests that the data is highly likely under our model.

Given the series of returns, the best constant predictor (in terms of log-likelihood) is the empirical sample covariance
\begin{equation}
    \Sigma^\mathrm{emp} = \dfrac{1}{T}\sum_{\tau =1}^T x_\tau x_\tau^\top,
\end{equation}
with normalized log-likelihood $\ell_t(\Sigma^\mathrm{emp}, x_{t+1})$.
Here $\tau=1$ corresponds to 2019-06-27 and $\tau=T$ corresponds to 2023-12-28. We therefore also report the difference
\begin{equation}
    \Delta_t = \ell(\Sigma^\mathrm{emp}, x_{t+1}) - \ell_t(\hat{\Sigma}_t, x_{t+1}).
\end{equation}
This is a measure of regret and measures how much the covariance estimate underperforms the best possible constant covariance predictor. We want our covariance estimate to have small regret. Note that the regret can be negative, because our covariance is time-dependent, i.e., a time-dependent covariance can outperform $\Sigma^\mathrm{emp}$.

Our results are included in \Cref{tab:loglike}. We show both the average normalized log-likelihood and regret over the evaluation period, but also the standard deviation of these averages. The extended model obtains a lift in both log-likelihood and regret. 
% The extended (permuted) model is worse than the base model.
% Based on the reported standard deviations, the lifts are considered statistically significant.

% \begin{table}[ht]
% \centering
% \begin{tabular}{lcccc}
% \toprule
% Model
% & \makecell{Average \\ Log-Likelihood}
% & \makecell{Std. of Average \\ Log-Likelihood}
% & \makecell{Average \\ Regret}
% & \makecell{Std. of Average \\ Regret} \\
% \midrule
% Base     & 2330.53  & 33.77  & 491.38  & 33.15 \\
% Extended & \textbf{2371.73} &  27.99 &  \textbf{450.18} &  27.30 \\
% Extended (permuted) & 2273.73 &  38.59 & 548.18 & 38.01 \\
% \bottomrule
% \end{tabular}
% \caption{Normalized log likelihood and normalized log likelihood regret metrics. The normalization is by the number of assets.}
% \label{tab:loglike}
% \end{table}

\begin{table}[ht]
\centering
\begin{tabular}{lcccc}
\toprule
Model
& \makecell{Average \\ Log-Likelihood}
& \makecell{Std. of Average \\ Log-Likelihood}
& \makecell{Average \\ Regret}
& \makecell{Std. of Average \\ Regret} \\
\midrule
Base     & 2.679  & 0.039  & 0.565  & 0.038 \\
Extended & \textbf{2.726} & 0.032 & \textbf{0.517} & 0.031 \\
% Extended (permuted) & 2.613 & 0.0444 & 0.630 & 0.0437 \\
\bottomrule
\end{tabular}
\caption{Normalized log-likelihood and normalized log-likelihood regret metrics. The normalization is by the number of assets.}
\label{tab:loglike}
\end{table}

\subsection{Whitened returns}\label{sec:whitened}

For each risk model, we compute out-of-sample whitened returns and their empirical correlation matrix. 
We judge the risk model by the discrepancy between the empirical correlation matrix and the implied (by the risk model) correlation matrix of the whitened returns.

% We consider two null hypotheses. Each hypothesis corresponds to a risk model--\textit{base} or \textit{extended}--being true. 

In detail, consider an arbitrary risk model $\hat{\Sigma}_t$ for every time $t$. 
% This would correspond to either the \textit{base}, \textit{extended}, or \textit{extended (permuted)} model. 
Under the hypothesis that the risk model is true and the returns are zero-mean, then $\mathbf{cov}(x_t) = \hat{\Sigma}_t$ and $\mathbf{cov}(\hat{\Sigma}_t^{-1/2} x_t)=I$, where $I$ is the identity. 

We now collect the out-of-sample whitened returns $\lbrace \hat{\Sigma}^{-1/2}_t x_{t+1} \rbrace_{t=1}^T$ that should have the identity as the covariance under this hypothesis. We compute the Frobenius norm between the empirical correlation matrix ($n \times n$ matrix) of the whitened returns $\lbrace \hat{\Sigma}^{-1/2}_t x_{t+1} \rbrace_{t=1}^T$ and their theoretical correlation matrix assuming the risk model is true, i.e., the identity matrix $I$. We report the Frobenius norm for each risk model in \Cref{tab:whitened}, but normalized by \rf{$\sqrt{n (n-1)}$}.
A larger normalized Frobenius norm indicates a greater discrepancy between the risk model and the covariance structure of the realized returns. The extended model achieves the best fit.

\begin{table}[ht]
\centering
\begin{tabular}{lcc}
\toprule
Model 
& \makecell{(normalized) Frobenius norm of\\ correlation matrix to identity} \\
\midrule
Base     &  0.077    \\
Extended &  \textbf{0.056} \\
% Extended (permuted) &  0.078 \\
\bottomrule
\end{tabular}
\caption{Results for the whitened returns. The normalization is by $\sqrt{n (n-1)}$, where $n$ is the number of assets.}
\label{tab:whitened}
\end{table}

% \ejc{I don't understand why we discuss normal distributions in this section. The entire discussion is about second moments. (fixed)}

% This implies that the correlation matrix is also $I$. We test how close the correlation matrix of whitened returns, $\mathbf{corr}\left(\lbrace \hat{\Sigma}^{-1/2}_t x_{t+1} \rbrace_t \right)$, is to the identity matrix. We report the Frobenius norm
% \begin{equation}
%     \lVert I_n - \mathbf{corr}\left(\lbrace \hat{\Sigma}^{-1/2}_t x_{t+1} \rbrace_t \right) \rVert_F,
% \end{equation}
% which measures how far the off-diagonal elements of the correlation matrix are from zero.
% % , and the maximum such deviation
% % \begin{equation}
%     % \max_{i \neq j}\ \lvert \mathbf{corr}\left(\lbrace \hat{\Sigma}^{-1/2}_t x_{t+1} \rbrace_t \right)_{i, j} \rvert.
% % \end{equation}
% Theoretically, the Frobenius norm should be zero. Our results are shown in \Cref{tab:whitened}. The extended model results in whitened out-of-sample returns that are closer to being uncorrelated.

% \begin{table}[ht]
% \centering
% \begin{tabular}{lcc}
% \toprule
% Model
% & \makecell{Frobenius norm of correlation\\ matrix to identity} \\
% \midrule
% Base     & 67.15    \\
% Extended & \textbf{49.20} \\
% Extended (permuted) & 68.36 \\
% \bottomrule
% \end{tabular}
% \caption{Results for the whitened returns.}
% \label{tab:whitened}
% \end{table}

\subsection{Whitened residuals}
We perform a similar procedure to the previous section, but instead of looking at out-of-sample whitened returns, we focus on out-of-sample whitened residuals. 
We describe our approach using an arbitrary risk model $\hat{\Sigma}_t$ for every time $t$. 
% This would correspond to either the \textit{base}, \textit{extended}, or \textit{extended (permuted)} model. 

At time $t$, we solve \rev{the out-of-sample ordinary least squares problem}
\rev{\begin{equation}\label{eq:ltsq}
    \minimize_{s_t} \lVert x_{t+1} - \hat{F}_t s_t \rVert_2^2 
    % \lVert r_{t+1} - \hat{F}_t s_t \rVert_2^2
\end{equation}}
and compute the residual
\begin{equation}
    \hat{\delta}_t = x_{t+1} - \hat{F}_t {\hat{s}_t}
\end{equation}
where $\hat{s}_t$ is the solution of \eqref{eq:ltsq}. 

We assume that the risk model $\hat{\Sigma}_t$ holds for the returns, i.e., the returns follow 
% \ejc{Again, I don't understand the role of the normal distribution. (fixed)}
\[x_t = \hat{F}_t s_t + \epsilon_t, \quad \mathbf{cov}(s_t) = I, \quad \mathbf{cov}(\epsilon_t)=\hat{D}_t.
\]
By letting $M_t = I - \hat{F}_t (\hat{F}_t^\top \hat{F}_t)^{-1} \hat{F}_t^\top$, it also holds that the in-sample ordinary least squares residuals $e_t$ (where we solve \eqref{eq:ltsq} but replacing $x_{t+1}$ with $x_t$) are equal to $M_t \epsilon_t$. Since $M_t$ is an orthogonal projection matrix it can be written as $M_t = U_t U_t^\top$, where $U_t$ is orthonormal. Let $z_t = U_t^\top e_t$. Then $\mathbf{cov}(z_t) = U_t^\top \hat{D}_t U_t$ and 
\begin{equation}
    \mathbf{cov}\left( (U_t^\top \hat{D}_t U_t)^{-1/2} z_t \right) = I.
\end{equation}
We collect the out-of-sample whitened residuals
\begin{equation}
    \lbrace (U_t^\top \hat{D}_t U_t)^{-1/2} U_t^\top \hat{\delta}_t \rbrace_{t=1}^T
\end{equation}
and compute the Frobenius norm between the empirical correlation matrix of the out-of-sample whitened residuals and their theoretical correlation matrix under the assumption that the risk model holds, i.e., the identity $I$. 
For each of the risk models, we report the normalized \rf{(by $\sqrt{n (n-1)}$)} Frobenius norm in \Cref{tab:residuals}. 
A larger normalized Frobenius norm indicates a greater discrepancy between the risk model and the covariance structure of the realized returns. The extended model achieves the best statistical fit.

\begin{table}[ht]
\centering
\begin{tabular}{lcc}
\toprule
Model
& \makecell{(normalized) Frobenius norm of\\ correlation matrix to identity} \\
\midrule
Base     & 0.072  \\
Extended  & \textbf{0.050} \\
% Extended (permuted) & 0.057 \\
\bottomrule
\end{tabular}
\caption{Results for the whitened residuals. The normalization is by $\sqrt{n (n-1)}$, where $n$ is the number of assets.}
\label{tab:residuals}
\end{table}

\section{Discussion}\label{sec:conclusion}
We proposed a methodology to refine an available low-rank-plus-diagonal risk model and extend it with additional statistical factors. Our method is based on maximum likelihood estimation and is an instance of the expectation-maximization algorithm.  This is particularly useful if the provided risk model is only infrequently updated, but we would like to incorporate information in shorter time horizons as well. We empirically show that this modification can improve the risk model's statistical fit \rev{even when the provided risk model is of high quality}. Our approach is also able to handle missing returns data. However, we note that adding factors does not necessarily improve statistical fit \cite{spector2024mosaic}. \rev{Furthermore, the effectiveness of our approach depends on the quality of the factors in the original risk model, namely the quality of the provided factor exposure matrix.} 

We leave the problem of selecting the number of additional factors as future work, although we do note that a value in the range 2-10 typically worked well for a base risk model of roughly 70 factors.
One simple approach for determining the number of additional factors is to evaluate out-of-sample $R^2$ and select the number of additional factors corresponding to the cutoff beyond which $R^2$ decreases.
An important aspect not covered in this work is interpreting the added statistical factors. This can be done by looking at the cross-sectional correlation of the added factors with existing themes or by querying a large language model.

We plan to combine our 
extension methodology with factor selection in the base model. This can be helpful if the base model has many factors, but only a few of them are relevant. We can reduce the number of base factors using, 
e.g., forward selection, and then apply our extension methodology to the reduced base model. 
We can also consider learning an $F_2$ with a specified range, if we wish to capture specific factors.

% \rf{Deriving the EM update for a block-diagonal matrix $D$ would be useful for extending a factor model that captures both equities and exchange traded funds (ETFs).}

% \rf{Future work will also deal with interpreting the added statistical factors. This can be done by looking at the cross-sectional correlation of the added factors with existing themes or by querying a large language model.}

\rf{Finally, we plan to construct portfolios via Markowitz optimization \cite{boyd2024markowitz} where we either use the extended or the base risk model for the asset return covariance. 
Preliminary results suggest that the extended model produces a Pareto frontier that dominates that of the base model in the space of realized returns and target volatilities.}

\begin{appendices}
\section{Different representations of the extended risk model}
\label{appendix:param}
We prove the equivalence between two different representations of the family of \textit{extended} risk models. 
\begin{lemma}
    Consider the family of risk models \eqref{eq:family1}. This is equivalent to the family of risk models given by 
    \begin{equation}
    \label{eq:family3}
    \hat{\Sigma}\left(F_{2, \perp}, \Sigma_f, D\right) =  \begin{bmatrix}
        F_1 & F_{1, \perp}
    \end{bmatrix}\Sigma_f\begin{bmatrix}
        F_1^\top\\
        F_{1, \perp}^\top
    \end{bmatrix} + D,
\end{equation}
where $F_{1, \perp}^\top F_1 = 0$.
\end{lemma}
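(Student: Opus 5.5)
We prove the two inclusions separately, as in the proof of the first lemma. Throughout, the family \eqref{eq:family3} ranges over all $F_{1,\perp}\in\mathbb{R}^{n\times n_2}$ with $F_{1,\perp}^\top F_1=0$, all positive definite $\Sigma_f$, and all positive diagonal $D$. The family \eqref{eq:family1} ranges over arbitrary $F_2\in\mathbb{R}^{n\times n_2}$ with the same $\Sigma_f$ and $D$. We assume $F_1$ has full column rank, which is implicit whenever regressions on $F_1$ are well defined, and we write $P=F_1(F_1^\top F_1)^{-1}F_1^\top$ for the orthogonal projector onto the range of $F_1$.

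The inclusion of \eqref{eq:family3} in \eqref{eq:family1} is immediate: set $F_2=F_{1,\perp}$ and keep the same $\Sigma_f$ and $D$. The same covariance matrix results.

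For the reverse inclusion, take a model with arbitrary $F_2$. Decompose $F_2=F_1B+F_{1,\perp}$, where
\[
B=(F_1^\top F_1)^{-1}F_1^\top F_2,\qquad F_{1,\perp}=(I-P)F_2 .
\]
By construction $F_{1,\perp}^\top F_1=0$. Then
\[
\begin{bmatrix} F_1 & F_2\end{bmatrix}=\begin{bmatrix} F_1 & F_{1,\perp}\end{bmatrix}\Phi,\qquad \Phi=\begin{bmatrix} I & B\\ 0 & I\end{bmatrix}.
\]
Setting $\Sigma_f'=\Phi\Sigma_f\Phi^\top$ reproduces the original covariance exactly, with $D$ unchanged:
\[
\begin{bmatrix} F_1 & F_2\end{bmatrix}\Sigma_f\begin{bmatrix} F_1 & F_2\end{bmatrix}^\top+D=\begin{bmatrix} F_1 & F_{1,\perp}\end{bmatrix}\Sigma_f'\begin{bmatrix} F_1 & F_{1,\perp}\end{bmatrix}^\top+D .
\]
It remains to check that $\Sigma_f'$ is admissible. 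Since $\Phi$ is block upper triangular with identity diagonal blocks, it is invertible, so the congruence $\Phi\Sigma_f\Phi^\top$ of a positive definite matrix is again positive definite. Hence the model lies in \eqref{eq:family3}.

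The only delicate point is the rank assumption on $F_1$. If $F_1$ is rank deficient, we replace $(F_1^\top F_1)^{-1}$ by the pseudoinverse. Then $F_1B=PF_2$ still holds, because the range of $F_1(F_1^\top F_1)^{+}F_1^\top$ equals the range of $F_1$. The factorization above remains valid, and $\Phi$ is still invertible, so the argument goes through unchanged. This lemma could also be chained with the first lemma via \eqref{eq:family2}, but the direct congruence argument is shorter.
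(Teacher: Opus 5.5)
Your proof is correct and rests on the same core idea as the paper's. Both split $F_2$ into its component in the range of $F_1$ plus a component orthogonal to it, and absorb the resulting block upper-triangular change of basis into $\Sigma_f$ by congruence.

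The difference is the choice of $F_{1,\perp}$. The paper takes $F_{1,\perp}$ to be an orthonormal basis of the \emph{entire} orthogonal complement of the range of $F_1$ and writes $F_2 = F_1U + F_{1,\perp}V$. Its transformation $\begin{bmatrix} I & U\\ 0 & V\end{bmatrix}$ is therefore generally rectangular. Read literally, the new middle matrix has order $n_1 + n - \mathrm{rank}\,F_1$ and rank at most $n_1+n_2$, so it is only positive semidefinite. The result lies in \eqref{eq:family3} only if one allows that many orthogonal factors and a singular factor covariance. What the paper's version buys is that $F_{1,\perp}$ depends on $F_1$ alone, with all dependence on $F_2$ pushed into the middle matrix.

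Your choice $F_{1,\perp} = (I-P)F_2$ instead makes $\Phi$ square and unit upper triangular. Hence $\Phi\Sigma_f\Phi^\top$ keeps both the size $(n_1+n_2)\times(n_1+n_2)$ and positive definiteness. That is exactly what membership in \eqref{eq:family3}, under the same constraints as \eqref{eq:family1}, requires, so your version is the more rigorous one.

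One small imprecision is in your rank-deficient remark. The identity $F_1B = PF_2$ holds by definition. What you actually need is $PF_1 = F_1$, so that $F_{1,\perp}^\top F_1 = F_2^\top (I-P)F_1 = 0$. This holds because $F_1(F_1^\top F_1)^{+}F_1^\top = F_1F_1^{+}$ is the orthogonal projector onto the range of $F_1$. Equality of ranges alone would not suffice.
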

\begin{proof}
    Consider a risk model $\hat{\Sigma}\left(F_{2, \perp}, \Sigma_f, D\right)$  in the family \eqref{eq:family3}. It is obviously included in family \eqref{eq:family1}. 

    Now consider a risk model $\hat{\Sigma}\left(F_{2}, \Sigma_f, D\right)$ in family \eqref{eq:family1}. We can write
    \begin{equation*}
        F_2 = F_1 U + F_{1, \perp} V,
    \end{equation*}
    where the columns of $F_{1, \perp}$ are an orthonormal basis of the orthogonal complement of the range of $F_1$. We can then rewrite the risk model as
    \begin{equation*}
    \hat{\Sigma}\left(F_{2}, \Sigma_f, D\right) =
    \begin{bmatrix}
        F_1 & F_{1, \perp}
    \end{bmatrix} \begin{bmatrix}
        I & U\\
        0 & V
    \end{bmatrix}\Sigma_f
    \begin{bmatrix}
        I & 0\\
        U^\top & V^\top
    \end{bmatrix}
    \begin{bmatrix}
        F_1 & F_{1, \perp}
    \end{bmatrix}^\top +D
    \end{equation*}
    Therefore, the two families are the same.
\end{proof}
\section{Equivalence of objectives}\label{appendix:equivalence}
Let $p_\theta(x)$ be the probability density function of the Gaussian $\mathcal{N}(0, \tF \tOmega \tF^\top +D)$:
\begin{equation}
\begin{aligned}
    \log p_\theta(x) &= - \dfrac{1}{2} x^\top \left( \tF \tOmega \tF^\top +D \right)^{-1} x - \dfrac{1}{2} \log \det \left( \tF \tOmega \tF^\top +D \right) - \dfrac{1}{2} n \log (2 \pi)\\
    &= - \dfrac{1}{2} \tr \left( \left( \tF \tOmega \tF^\top +D \right)^{-1} xx^\top\right) - \dfrac{1}{2} \log \det \left( \tF \tOmega \tF^\top +D \right) - \dfrac{1}{2} n \log (2 \pi)
\end{aligned}
\end{equation}
We show that the following two problems:
\begin{equation}
\label{eq:main_appp}
    \maximize_\theta \sum_{\tau \leq T} w_\tau \log p_{\theta}(x_{\tau}),
\end{equation}
\begin{equation}
\label{eq:main_KL}
    \minimize_\theta \mathrm{KL}\left(\mathcal{N}(0, C) \quad \Vert \quad \mathcal{N}(0, \tF \tOmega \tF^\top +D)\right)
\end{equation}
are equivalent, i.e., they have the same set of solutions $\theta^\star$, for 
\begin{equation}
    C = \sum_{\tau \leq T} w_\tau x_\tau x_\tau^\top.
\end{equation}
By the linearity of the trace operator and the fact that $\sum_{\tau \leq T} w_\tau=1$, we get
\begin{equation}
    \begin{aligned}
    \sum_{\tau \leq T} & w_\tau \log p_{\theta}(x_\tau)  \\
    &=- \dfrac{1}{2} \tr \left( \left( \tF \tOmega \tF^\top +D \right)^{-1} \sum_{\tau \leq T} w_\tau x_\tau x_\tau^\top\right) - \dfrac{1}{2} \log \det \left( \tF \tOmega \tF^\top +D \right) - \dfrac{1}{2} n \log (2 \pi) \\
    &=- \dfrac{1}{2} \tr \left( \left( \tF \tOmega \tF^\top +D \right)^{-1} C \right) - \dfrac{1}{2} \log \det \left( \tF \tOmega \tF^\top +D \right) - \dfrac{1}{2} n \log (2 \pi).
    \end{aligned}
\end{equation}
Problem \eqref{eq:main_KL} is equivalent (has the same solution) to
\begin{equation}
    \label{eq:_main}
    \maximize_\theta \mathbb{E}_{x \sim \mathcal{N}(0, C)} \log p_\theta(x),
\end{equation} 
by expanding the KL divergence objective.
% MK: maybe just define \ell_theta(x) and mention the dropping of additive and multiplicative constants (Fixed, no dropping)
By the cyclic property and the linearity of the trace operator
\begin{equation}
    \mathbb{E}_{x \sim \mathcal{N}(0, C)} \log p_\theta(x) = - \dfrac{1}{2} \tr\left( \left( \tF \tOmega \tF^\top +D \right)^{-1} C \right) - \dfrac{1}{2} \log \det \left( \tF \tOmega \tF^\top +D \right) - \dfrac{1}{2} n \log (2 \pi).
\end{equation}
This proves the claim and implies that our method can be thought of as maximizing log-likelihood or, equivalently, as minimizing the KL divergence with a time weighted empirical covariance matrix. 

As an aside, we should note that the minimizer of the Gaussian KL objective is equivariant under congruence transformations. Therefore, although not a focus of this paper, we could equivalently work with standardized returns. 

We further observe that for any zero-mean distribution $\mathbb{P}$ with covariance $C$ it holds that
\begin{equation}
    \mathbb{E}_{x \sim \mathcal{N}(0, C)} \log p_\theta(x) = \mathbb{E}_{x \sim \mathbb{P}} \log p_\theta(x).
\end{equation}
This justifies talking about problem \eqref{eq:_main} as maximum likelihood estimation, as we are maximizing the expected log-likelihood of the Gaussian $\mathcal{N}(0,  \tF \tOmega \tF^\top +D) $ under any  distribution $\mathbb{P}$.

\section{The EM update for fully observed returns}
\label{appendix:em}
Our derivation of the iterative EM algorithm focuses on optimizing a series of lower bounds for problem
\begin{equation}
\label{eq:main_appp}
    \maximize_\theta \sum_{\tau \leq T} w_\tau \log p_{\theta}(x_{\tau}),
\end{equation}
which is equivalent to 
\begin{equation}
\label{eq:obj_appendix}
    \maximize_\theta \mathbb{E}_{x \sim \mathcal{N}(0, C)} \log p_\theta(x),
\end{equation}
as shown in \Cref{appendix:equivalence}. In \Cref{appendix:equivalence}, we also showed that
these problems are equivalent to 
\begin{equation}
    \maximize_\theta \mathbb{E}_{x \sim \mathcal{P}} \log p_\theta(x),
\end{equation}
where $\mathcal{P}$ is the empirical data distribution (with covariance $C$ \eqref{eq:C}). 
% We showed the equivalence of these problems with \eqref{eq:main_appp} (which is the weighted Gaussian log-likelihood estimation problem for fully observed returns) in \Cref{appendix:equivalence}. 

\subsection{A lower bound of the objective }
We fix $\theta$, i.e., $F_2, \Omega$, and $D$. Let
\begin{equation*}
    S =\begin{bmatrix}
        S_1 \\
        S_2
    \end{bmatrix}\sim \mathcal{N}(0, \begin{bmatrix}
    \Omega & 0\\
    0 & I
\end{bmatrix}), \quad E \sim \mathcal{N}(0, D),
\end{equation*}
where $S_1 \in \mathbb{R}^{n_1}, S_2 \in \mathbb{R}^{n_2}$ and $E \in \mathbb{R}^n$ are random variables. Further, we suppose $S$ and $E$ are independent, i.e., $S \perp E$. We define
\begin{equation*}
    X = \begin{bmatrix}
        F_1 & F_2
    \end{bmatrix}\begin{bmatrix}
        S_1 \\
        S_2
    \end{bmatrix} + E.
\end{equation*}
It follows that
\begin{equation*}
    \begin{bmatrix}
        S \\
        X
    \end{bmatrix} \sim \mathcal{N}\left(0, \begin{bmatrix}
        \tOmega & \tOmega \tF^\top \\
        \tF \tOmega & \tF \tOmega \tF^\top +D
    \end{bmatrix} \right),
\end{equation*}
where 
\[
\Tilde{F} = \begin{bmatrix}
    F_1 & F_2
\end{bmatrix}, \quad \Tilde{\Omega} =\begin{bmatrix}
    \Omega & 0 \\
    0 & I
\end{bmatrix}.
\]
Marginally $X$ is distributed according to $\mathcal{N}(0, \tF \tOmega \tF^\top +D)$ with density
% MK: Not sure what is going on here for the rest of the paragraph. This is the definition of the marginal density, no? Then it says "In other words, p_theta(x) is the marginal density of X" which seems redundant. (Fixed)
\begin{equation*}
    p_\theta(x) = \int_s p_\theta(x,s) ds,
\end{equation*}
where $p_\theta(x,s)$ is the joint density of $(X,S)$ evaluated at $X=x$ and $S=s$.
% In other words, $p_\theta(x)$ is the marginal density of $X$.

We now also fix $x$. Let $q(s; x)$ be a probability density function in $s$, parametrized by $x$. Then
\begin{equation}
\label{eq:jensen}
\begin{aligned}
    \log p_\theta(x) &= \log \int_s  p_\theta(x,s) ds \\
    &= \log \int_s q(s; x) \dfrac{ p_\theta(x,s)}{q(s; x)} ds = \log \mathbb{E}_{s \sim q(s;x)} \dfrac{p_\theta(x,s)}{q(s;x)}
    \geq \mathbb{E}_{s \sim q(s;x)} \log \dfrac{p_\theta(x,s)}{q(s;x)},
\end{aligned}
\end{equation}
where the inequality follows from Jensen's inequality for the concave function $\log$. For brevity, we may write $q(s;x)$ instead of $s \sim q(s;x)$. Because eq.~\eqref{eq:jensen} holds for all $x$, it is true that
\begin{align}
\label{eq:ineq}
    \mathbb{E}_{x \sim \mathcal{N}(0, C)} \log p_\theta(x) \geq \mathbb{E}_{x \sim \mathcal{N}(0, C)} \mathbb{E}_{q(s; x)} \log \dfrac{p_\theta(x,s)}{q(s;x)}.
\end{align}
This implies that
\begin{equation}
\label{eq:basic_em}
    \maximize_\theta \mathbb{E}_{x \sim \mathcal{N}(0, C)} \log p_\theta(x) \geq \maximize_\theta \mathbb{E}_{x \sim \mathcal{N}(0, C)} \mathbb{E}_{q(s;x)} \log \dfrac{p_\theta(x,s)}{q(s;x)},
\end{equation}
where the optimal point of the right-hand side is the same as that of
\begin{equation}\label{eq:RHS}
    \maximize_\theta \mathbb{E}_{x \sim \mathcal{N}(0, C)} \mathbb{E}_{q(s;x)} \log p_\theta(x,s).
\end{equation}
Notice that we can write
\begin{equation*}
    p_\theta(x,s) =p_\theta(X=x, S=s) = p_\theta(E = x-\tF s, S=s) = p_\theta(E = x-\tF s) p_\theta(S=s),
\end{equation*}
because $S \perp E$.
Therefore, up to an additive constant,
\begin{align}
    \label{eq:p_xs}
    2\log p_\theta(x,s) = -(x-\tF s)^\top D^{-1} (x-\tF s) - \log \det D - s^\top \tOmega^{-1} s - \log \det \tOmega.
\end{align}
By expanding eq.~\eqref{eq:p_xs} and using
\begin{equation}
    \tOmega^{-1} = \begin{bmatrix}
        \Omega^{-1} & 0\\
        0 & I
    \end{bmatrix}
\end{equation}
we get
\begin{equation}
\begin{aligned}
    &- 2\log p_\theta(x,s) = \\
    &x^\top D^{-1}x - 2\begin{bmatrix}
        s_1^\top & s_2^\top
    \end{bmatrix}\begin{bmatrix}
        F_1^\top \\
        F_2^\top
    \end{bmatrix}D^{-1} x  + s_1^\top F_1^\top D^{-1} F_1 s_1 + 2 s_2^\top F_2^\top D^{-1} F_1 s_1  + s_2^\top F_2^\top D^{-1} F_2 s_2 \\
    &+s_1^\top \Omega^{-1} s_1 +s_2^\top s_2 + \log \det D + \log \det \Omega.
\end{aligned}
\end{equation}
For any density $q(s;x)$, ignoring all terms that do not depend on $F_2, \Omega$ or $D$, using the fact that the trace of a scalar equals the scalar, and the cyclic property of the trace, we get
\begin{equation}
\label{eq:obj}
\begin{aligned}
    \mathbb{E}_{q(s;x)} &\left[ - 2\log p_\theta(x,s)\right] = \\
    &\tr(D^{-1}xx^\top) + \tr(D^{-1} F_1 \mathbb{E}_{q(s;x)} \left[ s_1 s_1^\top \right] F_1^\top) - 2\tr(D^{-1} \mathbb{E}_{q(s;x)}\left[ x s_1^\top\right]F_1^\top)\\
    &-2\tr(D^{-1} \mathbb{E}_{q(s;x)} \left[ x s_2^\top  \right] F_2^\top) + 2 \tr(D^{-1}F_1 \mathbb{E}_{q(s;x)}\left[ s_1 s_2^\top \right] F_2^\top) \\
    &+ \tr(D^{-1}F_2 \mathbb{E}_{q(s;x)}\left[ s_2 s_2^\top\right] F_2^\top)\\
    &+\tr(\Omega^{-1} \mathbb{E}_{q(s;x)}\left[ s_1 s_1^\top \right])\\
    &+ \log \det D + \log \det \Omega.
\end{aligned}
\end{equation}

At this point, we have an explicit expression for the objective of problem \eqref{eq:RHS}.

\subsection{Deriving the EM update}

The EM algorithm is an iterative algorithm that solves a sequence of problems of the form \eqref{eq:RHS}. At each iteration the density $q(s;x)$ is updated. Assuming that at iteration $k$ we are provided with $\theta_k$, the EM algorithm computes $\theta_{k+1}$ as
\begin{equation}
\label{eq:emem}
    \theta_{k+1} = \argmax_\theta \mathbb{E}_{x \sim \mathcal{N}(0, C)} \mathbb{E}_{s \sim q_k(s;x)} \log p_\theta(x,s)=\argmin_\theta \mathbb{E}_{x \sim \mathcal{N}(0, C)} \mathbb{E}_{s \sim q_k(s;x)}  \left[ - 2\log p_\theta(x,s)\right],
\end{equation}
where $q_k(s; x) = p_{\theta_k}(S=s \mid X=x)$.
We therefore need to compute
\begin{itemize}
    \item $\mathbb{E}_{q_k(s;x)} \left[s_1 \right]=: \mathbb{E}\left[ S_1 \mid X=x; \theta_k\right]$
    \item $\mathbb{E}_{q_k(s;x)} \left[s_2 \right]=: \mathbb{E}\left[ S_2 \mid X=x; \theta_k\right]$ ,
    \item $\mathbb{E}_{q_k(s;x)} \left[ s_1 s_2^\top\right]=: \mathbb{E} \left[ S_1 S_2^\top \mid X=x; \theta_k\right]$,
    \item $\mathbb{E}_{q_k(s;x)} \left[ s_2 s_2^\top \right] =: \mathbb{E} \left[ S_2 S_2^\top \mid X=x; \theta_k\right]$,
    \item $\mathbb{E}_{q_k(s;x)} \left[ s_1 s_1^\top \right] =: \mathbb{E} \left[ S_1 S_1^\top \mid X=x; \theta_k\right]$.
\end{itemize}
This is easy to do under our Gaussian assumptions. Note that
\begin{equation*}
    \begin{bmatrix}
        S_1 \\
        S_2
    \end{bmatrix} \mid X=x; \theta_k \sim \mathcal{N}(L_k x, G_k).
\end{equation*}
We can compute $L_k$ and $G_k$ using the density $p_{\theta_k}(S=s \mid X=x)$. Because $S \mid X=x$ is Gaussian:
\begin{itemize}
    \item $\mathbb{E}\left[S \mid X=x; \theta_k \right] = \argmax_s \log p_{\theta_k}(S=s \mid X=x) = \argmax_s \log p_{\theta_k}(x,s)$. By taking the derivative of \eqref{eq:p_xs} with respect to $s$ and setting it to zero, we get the necessary optimality condition
    \begin{equation*}
        - \tF_k^\top D_k^{-1} x + \tF_k^\top D_k^{-1} \tF_k s + \tOmega_k^{-1} s = 0 \Leftrightarrow s = \left( \tF_k^\top D_k^{-1} \tF_k + \tOmega_k^{-1} \right)^{-1} \tF_k^\top D_k^{-1} x.
    \end{equation*}
    This implies that
    \begin{equation*}
        \mathbb{E}\left[S \mid X=x; \theta_k \right] = \left( \tF_k^\top D_k^{-1} \tF_k + \tOmega_k^{-1} \right)^{-1} \tF_k^\top D_k^{-1} x.
    \end{equation*}
    Quantities indexed by the iteration $k$ are evaluated using the parameter value $\theta_k$, e.g., $\Tilde{F}_k$ is the value of $\Tilde{F}$ at iteration $k$.
    \item The inverse of the conditional covariance is given by the Hessian of $-\log p_{\theta_k}(S=s\mid X=x)$ with respect to $s$ (that is equal to the Hessian of $-\log p_{\theta_k}(S=s, X=x)$ with respect to $s$), i.e.,
    \begin{equation*}
        G_k^{-1} = \tF_k^\top D_k^{-1} \tF_k + \tOmega_k^{-1}.
    \end{equation*}
\end{itemize}
We combine the two results and get that
\begin{equation*}
    L_k = G_k \tF_k^\top D_k^{-1}, \quad G_k^{-1} = \tF_k^\top D_k^{-1} \tF_k + \tOmega_k^{-1}.
\end{equation*}
Furthermore, by the definition of the conditional covariance, it holds that
\begin{equation*}
    \mathbb{E} \left[ SS^\top \mid X=x; \theta_k\right] = G_k + \mathbb{E}\left[ S \mid X=x; \theta_k\right]\mathbb{E}\left[ S \mid X=x; \theta_k\right]^\top,
\end{equation*}
which can equivalently be written as
\begin{align*}
    \begin{bmatrix}
        \mathbb{E} \left[ S_1S_1^\top \mid X=x; \theta_k\right] & \mathbb{E} \left[ S_1S_2^\top \mid X=x; \theta_k\right]\\[0.8em]
        \mathbb{E} \left[ S_2S_1^\top \mid X=x; \theta_k\right] & \mathbb{E} \left[ S_2S_2^\top \mid X=x; \theta_k\right]
    \end{bmatrix} = G_k + L_k xx^\top  L_k^\top.
\end{align*}
Notice that we can select the appropriate block from $\mathbb{E} \left[ SS^\top \mid X=x; \theta_k\right]$ by multiplying left and right with the appropriate (block-selecting) matrices. For example:
\begin{itemize}
    \item
\begin{equation*}
        \mathbb{E} \left[ S_1S_1^\top \mid X=x; \theta_k\right] = B_{l, 11} (G_k + L_k xx^\top  L_k^\top) B_{r, 11},
    \end{equation*}
    where
    \begin{equation*}
        B_{l, 11} = \begin{bmatrix}
            I_{n_1} & 0_{n_1 \times n_2}
        \end{bmatrix}, \quad
            B_{r, 11} = \begin{bmatrix}
                I_{n_1} \\
                0_{n_2 \times n_1}
            \end{bmatrix}.
    \end{equation*}
    and $I_{n_1}$ is the identity matrix of dimension $n_1$.
\item
\begin{equation*}
        \mathbb{E} \left[ S_2S_2^\top \mid X=x; \theta_k\right] = B_{l, 22} (G_k + L_k xx^\top  L_k^\top) B_{r, 22},
    \end{equation*}
    where
    \begin{equation*}
        B_{l, 22} = \begin{bmatrix}
            0_{n_2 \times n_1} & I_{n_2}
        \end{bmatrix}, \quad
            B_{r, 22} = \begin{bmatrix}
                0_{n_1 \times n_2} \\
                I_{n_2}
            \end{bmatrix}.
    \end{equation*}
\item
\begin{equation*}
        \mathbb{E} \left[ S_1S_2^\top \mid X=x; \theta_k\right] = B_{l, 12} (G_k + L_k xx^\top  L_k^\top) B_{r, 12},
    \end{equation*}
    where
    \begin{equation*}
        B_{l, 12} = \begin{bmatrix}
            I_{n_1} & 0_{n_1 \times n_2}
        \end{bmatrix}, \quad
            B_{r, 12} = \begin{bmatrix}
                0_{n_1 \times n_2} \\
                I_{n_2}
            \end{bmatrix}.
    \end{equation*}
\end{itemize}
Similarly
\begin{itemize}
\item \begin{equation*}
    \mathbb{E}\left[ S_1 \mid X=x; \theta_k\right] = B_{1} L_k x,
\end{equation*}
where $B_1 = \begin{bmatrix}
    I_{n_1} & 0_{n_1 \times n_2}
\end{bmatrix}$.
    \item \begin{equation*}
    \mathbb{E}\left[ S_2 \mid X=x; \theta_k\right] = B_{2} L_k x,
\end{equation*}
where $B_2 = \begin{bmatrix}
    0_{n_2 \times n_1} & I_{n_2}
\end{bmatrix}$.
\end{itemize}
By combining the above results with \eqref{eq:obj}, we get that
\begin{equation*}
\begin{aligned}
    \mathbb{E}_{s \sim q_k(s;x)}  &\left[ - 2\log p_\theta(x,s)\right] = \\
    &\tr(D^{-1}xx^\top) + \tr(D^{-1} F_1 B_{l, 11} (G_k + L_k xx^\top  L_k^\top) B_{r, 11} F_1^\top) \\
    &- 2\tr(D^{-1}xx^\top L_k^\top B_1^\top F_1^\top)\\
    &-2\tr(D^{-1} x x^\top L_k^\top B_2^\top F_2^\top)\\
    &+ 2 \tr(D^{-1}F_1 B_{l, 12} \left( G_k  + L_k x x^\top L_k^\top \right)B_{r, 12}F_2^\top) \\
    &+ \tr(D^{-1}F_2 B_{l, 22}\left( G_k  + L_k x x^\top L_k^\top \right)B_{r, 22} F_2^\top)\\
    &+\tr(\Omega^{-1}  B_{l, 11} \left( G_k  + L_k x x^\top L_k^\top \right)B_{r, 11})\\
    &+ \log \det D + \log \det \Omega
\end{aligned}
\end{equation*}
and therefore
\begin{equation*}
\begin{aligned}
    \mathbb{E}_{x \sim \mathcal{N}(0, C)} \mathbb{E}_{s \sim q_k(s;x)} & \left[ - 2\log p_\theta(x,s)\right]=\\
    &\tr(D^{-1}C) + \tr(D^{-1} F_1 B_{l, 11} (G_k + L_k C  L_k^\top) B_{r, 11} F_1^\top)\\
    &- 2\tr(D^{-1}C L_k^\top B_1^\top F_1^\top)\\
    &-2\tr(D^{-1} C L_k^\top B_2^\top F_2^\top) \\
    &+ 2 \tr(D^{-1}F_1  B_{l, 12}\left( G_k  + L_k C L_k^\top \right) B_{r, 12}F_2^\top) \\
    &+ \tr(D^{-1}F_2 B_{l, 22} \left( G_k  + L_k C L_k^\top \right)B_{r, 22} F_2^\top)\\
    &+\tr(\Omega^{-1} B_{l, 11} \left( G_k  + L_k C L_k^\top \right) B_{r, 11})\\
    &+ \log \det D + \log \det \Omega.
\end{aligned}
\end{equation*}
% At this point, we note that for any distribution $\mathbb{P}$ with zero mean and covariance $C$
% \begin{equation}
% \label{eq:any_distr}
%      \mathbb{E}_{x \sim \mathcal{N}(0, C)} \mathbb{E}_{q_k(s;x)}  \left[ - 2\log p_\theta(x,s)\right] =  \mathbb{E}_{x \sim \mathbb{P}} \mathbb{E}_{q_k(s;x)}  \left[ - 2\log p_\theta(x,s)\right].
% \end{equation}
% We will use this result later to show that the EM update is the same no matter the distribution of $x$, as long as this distribution is zero-mean with covariance $C$.
We now define
\begin{equation*}
\begin{aligned}
    C_{ss}^k &=: G_k + L_k C L_k^\top\\
    &=: \begin{bmatrix}
        C_{ss, 11}^k & C_{ss, 12}^k \\
        C_{ss, 12}^{k,\top} & C_{ss, 22}^k
    \end{bmatrix} =
     \begin{bmatrix}
        B_{l, 11} (G_k + L_k C  L_k^\top) B_{r, 11}  & B_{l, 12} (G_k + L_k C  L_k^\top) B_{r, 12} \\
        \left(B_{l, 12} (G_k + L_k C  L_k^\top) B_{r, 12} \right)^\top & B_{l, 22} (G_k + L_k C  L_k^\top) B_{r, 22}
    \end{bmatrix}
\end{aligned}
\end{equation*}
and partition
\begin{equation*}
    L_k = \begin{bmatrix}
        L_1^k \\
        L_2^k
    \end{bmatrix},
\end{equation*}
where $L_1^k \in \mathbb{R}^{n_1 \times n}$ and $L_k^2 \in \mathbb{R}^{n_2 \times n}$.
We further define
    \begin{equation*}
        \Tilde{\Sigma}_k =: C + F_1 C_{ss,11}^k F_1^\top -2 C L_1^{k,\top} F_1^\top,\quad
        \Tilde{C}_{xs}^k =: C L_2^{k,\top} - F_1 C_{ss, 12}^k.
    \end{equation*}
Note that
\begin{equation*}
    \mathbb{E}_{x \sim \mathcal{N}(0, C)} \mathbb{E}_{q_k(s; x)} \left[ x s^\top \right] = \mathbb{E}_{x \sim \mathcal{N}(0, C)} \left[ xx^\top L_k^\top \right] = C L_k^\top.
\end{equation*}
We define $C_{xs}^k =: C L_k^\top = \begin{bmatrix}
    C L_1^{k, \top} & C L_2^{k, \top}
\end{bmatrix} = \begin{bmatrix}
    C_{xs, 1}^k & C_{xs, 2}^k
\end{bmatrix}$.

With these definitions
\begin{equation}
\label{eq:final_obj}
\begin{aligned}
    \mathbb{E}_{x \sim \mathcal{N}(0, C)} \mathbb{E}_{q_k(s;x)} & \left[ - 2\log p_\theta(x,s)\right] = \\
    &\tr\left(D^{-1} \left( \Tilde{\Sigma}_k - 2\Tilde{C}_{xs}^kF_2^\top + F_2 C_{ss, 22}^k F_2^\top \right) \right)\\
    &+\tr(\Omega^{-1} C_{ss, 11}^k)\\
    &+\log \det \Omega + \log \det D.
\end{aligned}
\end{equation}
With our previous definitions, we can clearly see how each variable ($F_2, \Omega, D$) enters the optimization objective.

\subsection{Explicit equations for the EM update}
Using \eqref{eq:final_obj}, we are now ready to evaluate the EM update
\begin{equation}
\label{eq:opt}
    \theta_{k+1}=\argmin_\theta \mathbb{E}_{x \sim \mathcal{N}(0, C)} \mathbb{E}_{s \sim q_k(s;x)}  \left[ - 2\log p_\theta(x,s)\right]
\end{equation}
under the constraint that $D$ is diagonal. Notice that the problem is separable in $(D, F_2)$ and $\Omega$.

We do the easy part first and find the optimal value for $\Omega$. To do this we take the derivative of 
\[
\tr(\Omega^{-1} C_{ss, 11}^k)+\log \det \Omega 
\] 
with respect to $\Omega^{-1}$
and set it to zero. This gives:
\begin{equation}
    \Omega^{\star} = C_{ss, 11}^{k}.
\end{equation}
We now turn to the sub-problem that depends on $D, F$:
\begin{equation}
\label{eq:subprob}
    \minimize_{D} \minimize_{F_2} \tr\left( D^{-1} \left( \Tilde{\Sigma}_k - 2\Tilde{C}_{xs}^kF_2^\top + F_2 C_{ss, 22}^k F_2^\top \right) \right) + \log \det D.
\end{equation}
We fix $D$. The optimal $F_2$ for this $D$ can be found by taking the derivative of the objective with respect to $F_2$ and setting it to zero. This gives
\begin{align}
    D^{-1} (-2 \Tilde{C}_{xs}^k + 2 F_2^{\star}(D) C_{ss, 22}^k) = 0 \Leftrightarrow F_2^\star(D) = \Tilde{C}_{xs}^k C_{ss, 22}^{k, -1}.
\end{align}
It only remains to minimize over the diagonal $D$. We plug $F_2^\star(D)$ in the objective of \eqref{eq:subprob}, which takes care of the internal minimization with respect to $F_2$. To obtain the optimal diagonal $D$, we solve
\begin{equation}
\label{eq:D}
    \minimize_{D} \tr\left(D^{-1} \left(\Tilde{\Sigma}_k - \Tilde{C}_{xs}^k C_{ss, 22}^{k, -1} \Tilde{C}_{xs}^{k, \top} \right)\right) + \log \det D.
\end{equation}
Let $D_{ii}=1/\psi_i$. Then, problem \eqref{eq:D} is equivalent to
\begin{equation}
    \minimize_{\psi_1, \dots, \psi_n} \sum_{i=1}^n \psi_i \left(\Tilde{\Sigma}_k - \Tilde{C}_{xs}^k C_{ss, 22}^{k, -1} \Tilde{C}_{xs}^{k, \top} \right)_{ii} - \log \psi_i.
\end{equation}
The last problem is separable in the $\psi_i$'s. By taking the derivatives with respect to the $\psi_i$'s and setting them to zero, we get the solution
\begin{equation}
    \dfrac{1}{\psi_i^\star} = \left(\Tilde{\Sigma}_k - \Tilde{C}_{xs}^k C_{ss, 22}^{k, -1} \Tilde{C}_{xs}^{k, \top} \right)_{ii}.
\end{equation}
Therefore, it holds that
\begin{equation}
    D^\star = \diag \left(\Tilde{\Sigma}_k - \Tilde{C}_{xs}^k C_{ss, 22}^{k, -1} \Tilde{C}_{xs}^{k, \top} \right).
\end{equation}
To summarize, the EM update is
\begin{equation}
\label{eq:updatesss}
\begin{aligned}
    \Omega &\leftarrow C_{ss, 11}^k\\
    F_2 &\leftarrow \Tilde{C}_{xs}^k C_{ss, 22}^{k, -1}\\
    D &\leftarrow \diag \left(\Tilde{\Sigma}_k - \Tilde{C}_{xs}^k C_{ss, 22}^{k, -1} \Tilde{C}_{xs}^{k, \top} \right).
\end{aligned}
\end{equation}
We expect that the diagonal of the obtained risk model $\tF_k \tOmega_k \tF_k^\top +D_k$ will be \rf{equal} to the diagonal of $C$. When fitting a low-rank-plus-diagonal model by maximizing the Gaussian log likelihood, this is implied by the first-order optimality condition with respect to $D$ \cite[Chapter 9]{mardia2024multivariate}.

The update \eqref{eq:updatesss} can equivalently be written as
\begin{equation}
\begin{aligned}
    \Omega &\leftarrow C_{ss, 11}^k\\
    F_2 &\leftarrow \left( C_{xs, 2}^k - F_1 C_{ss, 12}^k \right) C_{ss, 22}^{k, -1}\\
    F & \leftarrow \begin{bmatrix}
        F_1 & F_2
    \end{bmatrix}\\
    D &\leftarrow \diag \left(C - 2 C_{xs}^k F^\top + F C_{ss}^k F^\top  \right).
\end{aligned}
\end{equation}
Here, we used the fact that $\mathrm{diag}(F_1 C_{ss, 12}^k F_2^T) = \mathrm{diag}(F_2 C_{ss, 21}^k F_1^\top)$, because the diagonal of a square matrix is equal to the diagonal of its transpose.

% \subsubsection{Interpretation as maximum likelihood estimation}
% By \eqref{eq:any_distr}, for any distribution $\mathbb{P}$ with zero mean and covariance $C$, the EM update satisfies
% \begin{equation}
%     \argmin_\theta \mathbb{E}_{x \sim \mathcal{N}(0, C)} \mathbb{E}_{q_k(s;x)}  \left[ - 2\log p_\theta(x,s)\right] = \argmin_\theta \mathbb{E}_{x \sim \mathbb{P}} \mathbb{E}_{q_k(s;x)}  \left[ - 2\log p_\theta(x,s)\right].
% \end{equation}
% This implies that minimizing the KL divergence between $\mathbb{P}$ and $\mathcal{N}(0, \tF \tOmega \tF^\top +D)$ with EM is equivalent to minimizing the KL divergence between $\mathcal{N}(0, C)$ and $\mathcal{N}(0, \tF \tOmega \tF^\top +D)$ with EM. The update only depends on the covariance $C$, which is common for $\mathbb{P}$ and $\mathcal{N}(0, C)$. This justifies calling this method maximum likelihood covariance estimation, as we are fitting a Gaussian to empirical data.

% When the covariance $C$ is not available, we can replace it with any empirical covariance estimate, such as a EWMA, from the observed samples.

\subsection{Evidence lower bound (ELBO) for EM}
\label{appendix:elbo}
The EM algorithm maximizes a sequence of lower bounds, as seen by
\begin{equation}
\label{eq:elbo_1}
    \maximize_\theta \mathbb{E}_{x \sim \mathcal{N}(0, C)} \log p_\theta(x) \geq \maximize_\theta \mathbb{E}_{x \sim \mathcal{N}(0, C)} \mathbb{E}_{s \sim q(s;x)} \log \dfrac{p_\theta(x,s)}{q(s;x)}
\end{equation}
for any $q(s;x)$.
If we set $q(s;x)$ to be the conditional distribution $p_\theta(s\mid x)$, then
\begin{equation}
\label{eq:eqcond}
    \log \dfrac{p_\theta(x, s)}{q(s;x)} = \log p_\theta(x) \Rightarrow \mathbb{E}_{q(s;x)} \log \dfrac{p_\theta(x,s)}{q(s;x)} = \log p_\theta(x).
\end{equation}
Let $J(\theta) =  \mathbb{E}_{x \sim \mathcal{N}(0, C)} \log p_\theta(x)$. Then, by eq.~\eqref{eq:ineq}
\begin{align}
\label{eq:1}
    J(\theta_{k+1}) \geq \mathbb{E}_{x \sim \mathcal{N}(0, C)} \mathbb{E}_{s \sim q_k(s; x)} \log \dfrac{p_{\theta_{k+1}}(x,s)}{q_k(s;x)}
\end{align}
and by \eqref{eq:emem}
\begin{equation}
    \label{eq:2}
    \mathbb{E}_{x \sim \mathcal{N}(0, C)} \mathbb{E}_{s \sim q_k(s; x)} \log \dfrac{p_{\theta_{k+1}}(x,s)}{q_k(s;x)} \geq \mathbb{E}_{x \sim \mathcal{N}(0, C)} \mathbb{E}_{s \sim q_k(s; x)} \log \dfrac{p_{\theta_{k}}(x,s)}{q_k(s;x)}.
\end{equation}
By \eqref{eq:eqcond}
\begin{equation}
    \mathbb{E}_{x \sim \mathcal{N}(0, C)} \mathbb{E}_{s \sim q_k(s; x)} \log \dfrac{p_{\theta_{k}}(x,s)}{q_k(s;x)} = \mathbb{E}_{x \sim \mathcal{N}(0, C)} \log p_{\theta_k}(x) = J(\theta_k).
\end{equation}
Combining the above results
\begin{equation}
    J(\theta_{k+1}) \geq J(\theta_k)
\end{equation}
for the EM update. In other words, EM improves the expected log-likelihood at every iteration.

\section{The EM update with missing returns}
\label{appendix:em_mis}
\subsection{The set-up}
We derive the EM algorithm in the case of missing returns data. We consider a time window of length $T$. Let $X_\tau$ be the returns random
vector at time $\tau=1, \dots, T$. We assume that at time $\tau$ we only observe the returns $X_\tau$ on a subset of assets indexed by the set $\mathcal{O}_\tau \subseteq \{1, \dots, n\}$. Let $\mathcal{M}_\tau = \{1, \dots, n\} \setminus \mathcal{O}_\tau$ be the set of assets with missing returns at time $\tau$. We denote the observed portion of the return at $\tau$ as $X_{\tau, \mathrm{obs}}$ (with value $x_{\tau, \mathrm{obs}}$) and the missing portion as $X_{\tau, \mathrm{mis}}$.
We define the permutation matrix $\Pi_\tau \in \mathbb{R}^{n \times n}$ that satisfies
\begin{equation*}
    \begin{bmatrix}
        X_{\tau, \mathrm{mis}} \\
        X_{\tau, \mathrm{obs}}
    \end{bmatrix} = \Pi_\tau X_\tau.
\end{equation*}
We wish to solve
\begin{equation}
    \label{eq:main_mis}
    \maximize_\theta \sum_{\tau \leq T} w_\tau \log p_{\theta, \mathrm{obs}, \tau}(x_{\tau, \mathrm{obs}}),
\end{equation}
where $p_\theta(x)$ is the probability density function of the Gaussian $\mathcal{N}(0, \tF \tOmega \tF^\top +D)$,
\begin{equation*}
    \tF = \begin{bmatrix}
        F_1 & F_2
    \end{bmatrix},\qquad \tOmega = \begin{bmatrix}
        \Omega & 0 \\
        0 & I
    \end{bmatrix},
\end{equation*}
$p_{\theta, \mathrm{obs}, \tau}(\cdot)$ is the marginal density of the observed returns at time $\tau$,
$\theta=:(F_2, \Omega, D)$, and $w_\tau$ are non-negative weights that sum to $1$.

It holds that
\begin{equation*}
    \begin{aligned}
    \log p_{\theta, \mathrm{obs}, \tau}(x_{\tau, \mathrm{obs}}) = \log \int p_\theta(x_\tau, s_\tau) dx_{\tau, \mathrm{mis}} ds_\tau = \\
    \log \int q(s_\tau, x_{\tau, \mathrm{mis}}; x_{\tau, \mathrm{obs}}) \dfrac{p_\theta(x_\tau, s_\tau) }{q(s_\tau, x_{\tau, \mathrm{mis}}; x_{\tau, \mathrm{obs}})} dx_{\tau, \mathrm{mis}} ds_\tau \geq \\
    \mathbb{E}_{s_\tau,x_{\tau, \mathrm{mis}} \sim q(s_\tau, x_{\tau, \mathrm{mis}}; x_{\tau, \mathrm{obs}})} \log \dfrac{p_\theta(x_\tau, s_\tau)}{q(s_\tau, x_{\tau, \mathrm{mis}}; x_{\tau, \mathrm{obs}})},
    \end{aligned}
\end{equation*}
where $x_\tau = \Pi_\tau^\top \begin{bmatrix}
    x_{\tau, \mathrm{mis}} \\
    x_{\tau, \mathrm{obs}}
\end{bmatrix}$, and $q(s_\tau, x_{\tau, \mathrm{mis}}; x_{\tau, \mathrm{obs}})$ is any probability density function over the latent variables $s_\tau$ and the missing returns $x_{\tau, \mathrm{mis}}$, parameterized by the observed returns $x_{\tau, \mathrm{obs}}$. The inequality follows from Jensen's inequality. The analysis above is similar to the analysis of \Cref{appendix:em}. Recall that $p_\theta(x,s)$ is the joint density of $(X,S)$ evaluated at $X=x$ and $S=s$.

We now have a lower bound for each term in the objective of \eqref{eq:main_mis}. Instead of solving \eqref{eq:main_mis} directly, we can maximize the lower bound
\begin{equation*}
    \maximize_\theta \sum_{\tau \leq T} w_\tau \mathbb{E}_{s_\tau,x_{\tau, \mathrm{mis}} \sim q(s_\tau, x_{\tau, \mathrm{mis}}; x_{\tau, \mathrm{obs}})} \log \dfrac{p_\theta(x_\tau, s_\tau)}{q(s_\tau, x_{\tau, \mathrm{mis}}; x_{\tau, \mathrm{obs}})}
\end{equation*}
that has the same solution as
\begin{equation}
    \label{eq:mis_main_lb}
    \minimize_\theta \sum_{\tau \leq T} w_\tau \mathbb{E}_{s_\tau,x_{\tau, \mathrm{mis}} \sim q(s_\tau, x_{\tau, \mathrm{mis}}; x_{\tau, \mathrm{obs}})} \left[ - 2\log p_\theta(x_\tau, s_\tau) \right].
\end{equation}
Note that as derived in \Cref{appendix:em},
\begin{equation}
\label{eq:obj_mis}
\begin{aligned}
    \mathbb{E}_{q_\tau} &\left[ - 2\log p_\theta(x_\tau,s_\tau)\right] = \\
    &\tr(D^{-1} \mathbb{E}_{q_\tau} \left[ x_\tau x_\tau^\top \right]) + \tr(D^{-1} F_1 \mathbb{E}_{q_\tau} \left[ s_{\tau,1} s_{\tau, 1}^\top \right] F_1^\top) - 2\tr(D^{-1} \mathbb{E}_{q_\tau}\left[ x_\tau s_{\tau, 1}^\top\right]F_1^\top)\\
    &-2\tr(D^{-1} \mathbb{E}_{q_\tau} \left[ x_\tau s_{\tau, 2}^\top  \right] F_2^\top) + 2 \tr(D^{-1}F_1 \mathbb{E}_{q_\tau}\left[ s_{\tau,1} s_{\tau,2}^\top \right] F_2^\top) \\
    &+ \tr(D^{-1}F_2 \mathbb{E}_{q_\tau}\left[ s_{\tau, 2} s_{\tau, 2}^\top\right] F_2^\top)\\
    &+\tr(\Omega^{-1} \mathbb{E}_{q_\tau}\left[ s_{\tau,1} s_{\tau,1}^\top \right])\\
    &+ \log \det D + \log \det \Omega,
\end{aligned}
\end{equation}
where we abbreviate $s_\tau,x_{\tau, \mathrm{mis}} \sim q(s_\tau, x_{\tau, \mathrm{mis}}; x_{\tau, \mathrm{obs}})$ as $q_\tau$ for simplicity.

EM solves \eqref{eq:mis_main_lb} iteratively. Assuming we are at iteration $k$, with $\theta_k$, we set
\begin{equation}
    \label{eq:q_mis}
    q(s_\tau, x_{\tau, \mathrm{mis}}; x_{\tau, \mathrm{obs}}) = p_{\theta_k}(S_\tau=s_\tau, X_{\tau, \mathrm{mis}}=x_{\tau, \mathrm{mis}} \mid X_{\tau, \mathrm{obs}} = x_{\tau, \mathrm{obs}}).
\end{equation}
We remind the reader that for arbitrary $\theta$, the random variables satisfy
\begin{equation*}
    \begin{aligned}
        X_{\tau, \mathrm{obs}} = \tF_{\tau, \mathrm{obs}} S_\tau + E_{\tau, \mathrm{obs}} \\
        X_{\tau, \mathrm{mis}} = \tF_{\tau, \mathrm{mis}} S_\tau + E_{\tau, \mathrm{mis}}
    \end{aligned},
\end{equation*}
where
\begin{equation*}
    S_\tau =\begin{bmatrix}
        S_{\tau,1} \\
        S_{\tau,2}
    \end{bmatrix}\sim \mathcal{N}(0, \begin{bmatrix}
    \Omega & 0\\
    0 & I
\end{bmatrix}), \quad E_\tau \sim \mathcal{N}(0, D),
\end{equation*}
and $S_\tau$ is independent of $E_\tau$. Also, to make the notation clear: $\tF_{\tau, \mathrm{obs}}$ refers to the rows of $\tF$ indexed by $\mathcal{O}_\tau$, and similarly for the other quantities indexed by `obs' or `mis'.

These relations allows us to compute the conditional distribution in \eqref{eq:q_mis} and the necessary statistical moments, as we show next.

\subsection{Computing the statistical moments}
Similar to \Cref{appendix:em}, the conditional distribution of $S_\tau$ given the observed returns $X_{\tau, \mathrm{obs}} = x_{\tau, \mathrm{obs}}$ at iteration $k$ (i.e., with $\tF_k, \Omega_k, D_k$) is
\begin{equation}
    S_\tau \mid X_{\tau, \mathrm{obs}} = x_{\tau, \mathrm{obs}}; \theta_k \sim \mathcal{N}(L_{k,\tau} x_{\tau, \mathrm{obs}}, G_{k,\tau}),
\end{equation}
where
\begin{equation}
    L_{k,\tau} = G_{k, \tau} \tF_{k, \tau, \mathrm{obs}}^\top D_{k, \tau, \mathrm{obs}}^{-1}, \quad G_{k,\tau}^{-1} = \tF_{k,\tau, \mathrm{obs}}^\top D_{k, \tau, \mathrm{obs}}^{-1} \tF_{k, \tau, \mathrm{obs}} + \tOmega_{k}^{-1}.
\end{equation}
To make the notation clear, $\tF_{k, \tau, \mathrm{obs}}$ is the sub-matrix of $\tF_k$ that contains the rows indexed by $\mathcal{O}_\tau$ and $D_{k, \tau, \mathrm{obs}}$ is the sub-matrix of $D_k$ that contains the rows and columns indexed by $\mathcal{O}_\tau$.
We also define
\begin{equation*}
    \hat{s}_\tau =: L_{k, \tau} x_{\tau, \mathrm{obs}}.
\end{equation*}
We can now compute the necessary statistical moments at iteration $k$. First, similar to \Cref{appendix:em}, we get
\begin{equation*}
    \mathbb{E} \left[ S_\tau S_\tau^\top \mid X_{\tau, \mathrm{obs}} = x_{\tau, \mathrm{obs}}; \theta_k\right] = G_{k,\tau} + \hat{s}_\tau \hat{s}_\tau^\top.
\end{equation*}
Second,
\begin{equation}
    \label{eq:moment_obs}
    \mathbb{E} \left[ X_{\tau, \mathrm{obs}} S_\tau^\top \mid X_{\tau, \mathrm{obs}} = x_{\tau, \mathrm{obs}}; \theta_k \right] = x_{\tau, \mathrm{obs}} \hat{s}_\tau^\top.
\end{equation}
Third,
\begin{equation}
    \label{eq:moment_mis}
    \begin{aligned}
    \mathbb{E} \left[ X_{\tau, \mathrm{mis}} S_\tau^\top \mid X_{\tau, \mathrm{obs}} = x_{\tau, \mathrm{obs}}; \theta_k \right] = \tF_{k, \tau, \mathrm{mis}} \mathbb{E} \left[ S_\tau S_\tau^\top \mid X_{\tau, \mathrm{obs}} = x_{\tau, \mathrm{obs}}; \theta_k\right] \\
    =\tF_{k, \tau, \mathrm{mis}} (\hat{s}_\tau \hat{s}_\tau^\top + G_{k,\tau}) = \hat{x}_{\tau, \mathrm{mis}} \hat{s}_\tau^\top + \tF_{k, \tau, \mathrm{mis}} G_{k,\tau},
    \end{aligned}
\end{equation}
where $\hat{x}_{\tau, \mathrm{mis}} = \tF_{k, \tau, \mathrm{mis}} \hat{s}_\tau$ are the imputed missing values.
Here, we leveraged the fact that $E_{\tau, \mathrm{mis}}$ is independent of $(E_{\tau, \mathrm{obs}}, S_\tau)$.
Equations~\eqref{eq:moment_obs}-\eqref{eq:moment_mis} imply that
\begin{equation}
\begin{aligned}
    \mathbb{E} \left[ X_\tau S_\tau^\top \mid X_{\tau, \mathrm{obs}} = x_{\tau, \mathrm{obs}}; \theta_k \right] &=\mathbb{E} \left[  \Pi_\tau^\top \begin{bmatrix}
        X_{\tau, \mathrm{mis}} \\
        X_{\tau, \mathrm{obs}}
    \end{bmatrix} S_\tau^\top \mid X_{\tau, \mathrm{obs}} = x_{\tau, \mathrm{obs}}; \theta_k \right] \\
    &=\Pi_\tau^\top \begin{bmatrix}
        \hat{x}_{\tau, \mathrm{mis}}  \\
        x_{\tau, \mathrm{obs}}
    \end{bmatrix} \hat{s}_\tau^\top + \Pi_\tau^\top  \begin{bmatrix}
        \tF_{k, \tau, \mathrm{mis}} G_{k,\tau} \\
        0
    \end{bmatrix},
\end{aligned}
\end{equation}
where 
\[
\hat{x}_\tau = \Pi_\tau^\top \begin{bmatrix}
        \hat{x}_{\tau, \mathrm{mis}}  \\
        x_{\tau, \mathrm{obs}}
    \end{bmatrix}
\]
are the imputed values for $X_\tau$.
Finally, it holds that
\[
\mathbb{E} \left[ X_\tau X_\tau^\top \mid X_{\tau, \mathrm{obs}} = x_{\tau, \mathrm{obs}}; \theta_k \right] = \Pi_\tau^\top \mathbb{E} \left[ \begin{bmatrix}
    X_{\tau, \mathrm{mis}} \\
    X_{\tau, \mathrm{obs}}
\end{bmatrix} \begin{bmatrix}
    X_{\tau, \mathrm{mis}} \\
    X_{\tau, \mathrm{obs}}
\end{bmatrix}^\top \mid X_{\tau, \mathrm{obs}} = x_{\tau, \mathrm{obs}}; \theta_k \right]  \Pi_\tau,
\]
where, because $E_{\tau, \mathrm{mis}}$ is independent of $(E_{\tau, \mathrm{obs}}, S_\tau)$ we can write:
\[
\begin{aligned}
\mathbb{E} \left[ X_{\tau, \mathrm{mis}} X_{\tau, \mathrm{mis}}^\top \mid X_{\tau, \mathrm{obs}} = x_{\tau, \mathrm{obs}}; \theta_k  \right]
&=
\Tilde{F}_{k, \tau, \mathrm{mis}} \mathbb{E} \left[ S_\tau S_\tau^\top  \mid X_{\tau, \mathrm{obs}} = x_{\tau, \mathrm{obs}}; \theta_k  \right] \Tilde{F}_{k, \tau, \mathrm{mis}}^\top + \mathbb{E} \left[ E_{\tau, \mathrm{mis}} E_{\tau, \mathrm{mis}}^\top \right] \\
&= \hat{x}_{\tau, \mathrm{mis}} \hat{x}_{\tau, \mathrm{mis}}^\top + \tF_{k, \tau, \mathrm{mis}} G_{k,\tau} \tF_{k, \tau, \mathrm{mis}}^\top + D_{k, \tau, \mathrm{mis}}.
\end{aligned}
\]
Furthermore:
\[
\begin{aligned}
\mathbb{E} \left[ X_{\tau, \mathrm{obs}} X_{\tau, \mathrm{mis}}^\top \mid X_{\tau, \mathrm{obs}} = x_{\tau, \mathrm{obs}}; \theta_k  \right] &= \mathbb{E} \left[ X_{\tau, \mathrm{obs}} S_\tau^\top \mid X_{\tau, \mathrm{obs}} = x_{\tau, \mathrm{obs}}; \theta_k \right] \Tilde{F}_{k, \tau, \mathrm{mis}}^\top \\
&= {x}_{\tau, \mathrm{obs}} \hat{x}_{\tau, \mathrm{mis}}^\top,
\end{aligned}
\]
\[
\mathbb{E} \left[ X_{\tau, \mathrm{obs}} X_{\tau, \mathrm{obs}}^\top \mid X_{\tau, \mathrm{obs}} = x_{\tau, \mathrm{obs}}; \theta_k  \right] = x_{\tau, \mathrm{obs}} x_{\tau, \mathrm{obs}}^\top
\]
We can combine the expressions above to write
\begin{equation}
    \mathbb{E} \left[ X_\tau X_\tau^\top \mid X_{\tau, \mathrm{obs}} = x_{\tau, \mathrm{obs}}; \theta_k \right] = \hat{x}_\tau \hat{x}_\tau^\top + \Pi_\tau^\top \begin{bmatrix}
        \tF_{k, \tau, \mathrm{mis}} G_{k,t} \tF_{k, \tau, \mathrm{mis}}^\top +D_{k, \tau, \mathrm{mis}} & 0 \\
        0 & 0
    \end{bmatrix} \Pi_\tau.
\end{equation}
The zero entries of the matrix in the right-hand side are reasonable, because the conditional covariance of any quantity involving $X_{\tau, \mathrm{obs}}$ is $0$.

\subsection{Explicit equations for the EM update}
We are at iteration $k$, with $\theta_k$. Let
\begin{itemize}
    \item $C_{xx}^k =: \sum_{\tau \leq T} w_\tau \mathbb{E} \left[ X_\tau X_\tau^\top \mid X_{\tau, \mathrm{obs}} = x_{\tau, \mathrm{obs}}; \theta_k \right]$,
    \item $W_{xs}^k =: \sum_{\tau \leq T} w_\tau \mathbb{E} \left[ X_\tau S_\tau^\top \mid X_{\tau, \mathrm{obs}} = x_{\tau, \mathrm{obs}}; \theta_k \right] = \begin{bmatrix}
        W_{xs, 1}^k &  W_{xs, 2}^k
    \end{bmatrix}$,
    \item $C_{ss}^k =: \sum_{\tau \leq T} w_\tau \mathbb{E} \left[ S_\tau S_\tau^\top \mid X_{\tau, \mathrm{obs}} = x_{\tau, \mathrm{obs}}; \theta_k \right] = \begin{bmatrix}
        C_{ss, 11}^k & C_{ss, 12}^k\\
        C_{ss, 21}^k & C_{ss, 22}^k
    \end{bmatrix}$.
\end{itemize}
These can be computed as shown in the end of the previous subsection.
EM solves problem \eqref{eq:mis_main_lb} using \eqref{eq:q_mis} to get $\theta_{k+1}$. With our new definitions, we can write problem \eqref{eq:mis_main_lb} as
\begin{equation}
    \label{eq:equiv_mis}
    \begin{aligned}
    \minimize_\theta &\tr\left(D^{-1} \left( C_{xx}^k + F_1 C_{ss, 11}^k F_1^\top - 2 W_{xs, 1}^k F_1^\top - 2 W_{xs, 2}^k F_2^\top + 2 F_1 C_{ss, 12}^k F_2^\top + F_2 C_{ss, 22}^k F_2^\top  \right) \right) \\
    &+ \log \det D \\
    &+ \tr(\Omega^{-1} C_{ss, 11}^k) + \log \det \Omega.
    \end{aligned}
\end{equation}
We further define
\begin{itemize}
    \item $\Tilde{\Sigma}_k =: C_{xx}^k + F_1 C_{ss,11}^k F_1^\top -2 W_{xs,1}^k F_1^\top$,
    \item $\Tilde{C}_{xs}^k =: W_{xs,2}^k - F_1 C_{ss, 12}^k$.
\end{itemize}
With these definitions, \eqref{eq:equiv_mis} becomes
\begin{equation}
\label{eq:final_obj_mis}
\begin{aligned}
    \minimize_\theta
    &\tr\left(D^{-1} \left( \Tilde{\Sigma}_k - 2\Tilde{C}_{xs}^kF_2^\top + F_2 C_{ss, 22}^k F_2^\top \right) \right)\\
    & + \log \det D\\
    &+\tr(\Omega^{-1} C_{ss, 11}^k) +\log \det \Omega.
\end{aligned}
\end{equation}
We have already solved this problem in \Cref{appendix:em}. Therefore, the EM update with missing returns data is

\begin{equation}
\begin{aligned}
    \Omega &\leftarrow C_{ss, 11}^k\\
    F_2 &\leftarrow \Tilde{C}_{xs}^k C_{ss, 22}^{k, -1}\\
    D &\leftarrow \diag \left(\Tilde{\Sigma}_k - \Tilde{C}_{xs}^k C_{ss, 22}^{k, -1} \Tilde{C}_{xs}^{k, \top} \right)
\end{aligned}.
\end{equation}

\end{appendices}

\bibliographystyle{plain}  % Style of bibliography
\bibliography{references}  % Name of your .bib file

@article{markowitz1952,
  author  = {Markowitz, Harry},
  title   = {Portfolio {S}election},
  journal = {The Journal of Finance},
  year    = {1952},
  volume  = {7},
  number  = {1},
  pages   = {77--91}
}

@book{grinold2000active,
  title={Active {P}ortfolio {M}anagement},
  author={Grinold, Richard C and Kahn, Ronald N},
  year={2000},
  publisher={McGraw Hill New York}
}

@book{mcneil2015quantitative,
  title={Quantitative {R}isk {M}anagement: {C}oncepts, {T}echniques and {T}ools-{R}evised {E}dition},
  author={McNeil, Alexander J and Frey, R{\"u}diger and Embrechts, Paul},
  year={2015},
  publisher={Princeton University Press}
}

@article{sharpe1964capital,
  title={Capital {A}sset {P}rices: A {T}heory of {M}arket {E}quilibrium under {C}onditions of {R}isk},
  author={Sharpe, William F},
  journal={The Journal of Finance},
  volume={19},
  number={3},
  pages={425--442},
  year={1964},
  publisher={Wiley Online Library}
}

@article{boyd2024markowitz,
  title={Markowitz {P}ortfolio {C}onstruction at {S}eventy},
  author={Boyd, Stephen and Johansson, Kasper and Kahn, Ronald and Schiele, Philipp and Schmelzer, Thomas},
  journal={Journal of Portfolio Management},
  volume={50},
  number={8},
  pages={117-160},
  year={2024}
}

@misc{msci_barra2023,
  author       = {{MSCI Inc.}},
  title        = {{MSCI} {Barra} {R}isk {M}odels},
  howpublished = {\url{https://app2.msci.com/products/analytics/models/}},
}

@article{engle1982autoregressive,
  title={Autoregressive {C}onditional {H}eteroscedasticity with {E}stimates of the {V}ariance of {U}nited {K}ingdom {I}nflation},
  author={Engle, Robert F},
  journal={Econometrica: Journal of the Econometric Society},
  pages={987--1007},
  year={1982},
  publisher={JSTOR}
}

@article{johansson2023simple,
  title={A {S}imple {M}ethod for {P}redicting {C}ovariance {M}atrices of {F}inancial {R}eturns},
  author={Johansson, Kasper and Ogut, Mehmet G and Pelger, Markus and Schmelzer, Thomas and Boyd, Stephen},
  journal={Foundations and Trends{\textregistered} in Econometrics},
  volume={12},
  number={4},
  pages={324--407},
  year={2023},
  publisher={Now Publishers, Inc.}
}

@article{fan2016overview,
  title={An {O}verview of the {E}stimation of {L}arge {C}ovariance and {P}recision {M}atrices},
  author={Fan, Jianqing and Liao, Yuan and Liu, Han},
  journal={The Econometrics Journal},
  volume={19},
  number={1},
  pages={C1--C32},
  year={2016},
  publisher={Oxford University Press Oxford, UK}
}

@article{stuaricua2005nonstationarities,
  title={Nonstationarities in {S}tock {R}eturns},
  author={St{\u{a}}ric{\u{a}}, C{\u{a}}t{\u{a}}lin and Granger, Clive},
  journal={Review of Economics and Statistics},
  volume={87},
  number={3},
  pages={503--522},
  year={2005},
  publisher={MIT Press}
}

@article{fama1992cross,
  title={The {C}ross-{S}ection of {E}xpected {S}tock {R}eturns},
  author={Fama, Eugene F and French, Kenneth R},
  journal={The Journal of Finance},
  volume={47},
  number={2},
  pages={427--465},
  year={1992},
  publisher={Wiley Online Library}
}

@article{fama1993common,
  title={Common {R}isk {F}actors in the {R}eturns on {S}tocks and {B}onds},
  author={Fama, Eugene F and French, Kenneth R},
  journal={Journal of Financial Economics},
  volume={33},
  number={1},
  pages={3--56},
  year={1993},
  publisher={Elsevier}
}

@book{fabozzi2008handbook,
  title={Handbook of {F}inance, {I}nvestment {M}anagement and {F}inancial {M}anagement},
  author={Fabozzi, Frank J},
  volume={2},
  year={2008},
  publisher={Wiley}
}

@book{hastie2009elements,
  title={The {E}lements of {S}tatistical {L}earning},
  author={Hastie, Trevor and Tibshirani, Robert and Friedman, Jerome},
  year={2009},
  edition={2nd},
  publisher={Springer}
}

@article{pafka2003noisy,
  title={Noisy {C}ovariance {M}atrices and {P}ortfolio {O}ptimization {II}},
  author={Pafka, Szil{\'a}rd and Kondor, Imre},
  journal={Physica A: Statistical Mechanics and its Applications},
  volume={319},
  pages={487--494},
  year={2003},
  publisher={Elsevier}
}

@article{ledoit2003honey,
  title={Honey, {I} {S}hrunk the {S}ample {C}ovariance {M}atrix},
  author={Ledoit, Olivier and Wolf, Michael},
  journal={The Journal of Portfolio Management},
  volume={30},
  number={4},
  pages={110--119},
  year={2004}
}

@article{menchero2011barra,
  title={The {B}arra {US} {E}quity {M}odel ({USE4}), {M}ethodology {N}otes},
  author={Menchero, Jose and Orr, D and Wang, Jun},
  journal={MSCI Barra},
  year={2011}
}

@misc{jp1996riskmetrics,
  title={Riskmetrics--{T}echnical {D}ocument},
  author={JP Morgan, REUTERS},
  year={1996},
  publisher={JP Morgan--Reuters, New York}
}

@article{bai2008large,
  title={Large {D}imensional {F}actor {A}nalysis},
  author={Bai, Jushan and Ng, Serena},
  journal={Foundations and Trends{\textregistered} in Econometrics},
  volume={3},
  number={2},
  pages={89--163},
  year={2008},
  publisher={Now Publishers, Inc.}
}

@article{rubin1982algorithms,
  title={{EM} {A}lgorithms for {ML} {F}actor {A}nalysis},
  author={Rubin, Donald B and Thayer, Dorothy T},
  journal={Psychometrika},
  volume={47},
  number={1},
  pages={69--76},
  year={1982},
  publisher={Springer-Verlag}
}

@inproceedings{seghouane2008iterative,
  title={An {I}terative {P}rojections {A}lgorithm for {ML} {F}actor {A}nalysis},
  author={Seghouane, Abd-Krim},
  booktitle={IEEE Workshop on Machine Learning for Signal Processing},
  pages={333--338},
  year={2008}
}

@article{spector2024mosaic,
  title={The {M}osaic {P}ermutation {T}est: {A}n {E}xact and {N}onparametric {G}oodness-of-{F}it {T}est for {F}actor {M}odels},
  author={Spector, Asher and Barber, Rina Foygel and Hastie, Trevor and Kahn, Ronald N and Cand{\`e}s, Emmanuel},
  journal={arXiv preprint arXiv:2404.15017},
  year={2024}
}

@article{yeon2025beyond,
  title={Beyond {L}ow {R}ank: {F}ast {L}ow-{R}ank + {D}iagonal {D}ecomposition with a {S}pectral {A}pproach},
  author={Yeon, Kingsley and Anitescu, Mihai},
  journal={arXiv preprint arXiv:2512.17120},
  year={2025}
}

@article{candes2025thematic,
  title={Thematic {I}nvesting: {A} {R}isk-{B}ased {P}erspective},
  author={Cand{\`e}s, Emmanuel and Hastie, Trevor and Hogan, Ked and Kahn, Ronald N and Luo, Robert and Spector, Asher},
  journal={Financial Analysts Journal},
  volume={81},
  number={4},
  pages={103--120},
  year={2025},
  publisher={Taylor \& Francis}
}

@book{mardia2024multivariate,
  author    = {Mardia, K. V. and Kent, J. T. and Bibby, J. M.},
  title     = {Multivariate {A}nalysis},
  year      = {1979},
  publisher = {Academic Press},
}

@article{menchero2008custom,
  title={Custom {F}actor {A}ttribution},
  author={Menchero, Jose and Poduri, Vijay},
  journal={Financial Analysts Journal},
  volume={64},
  number={2},
  pages={81--92},
  year={2008},
  publisher={Taylor \& Francis}
}

@article{rosenberg1974extra,
  title={Extra-{M}arket {C}omponents of {C}ovariance in {S}ecurity {R}eturns},
  author={Rosenberg, Barr},
  journal={Journal of Financial and quantitative analysis},
  volume={9},
  number={2},
  pages={263--274},
  year={1974},
  publisher={Cambridge University Press}
}

@article{lee2025narrative,
  title={Narrative {F}actors and {R}isk {M}odels},
  author={Lee, Wai and Brown, Ryan and de Silva, Harin},
  journal={Available at SSRN 5271271},
  year={2025}
}

@article{fan2008high,
  title={High {D}imensional {C}ovariance {M}atrix {E}stimation using a {F}actor {M}odel},
  author={Fan, Jianqing and Fan, Yingying and Lv, Jinchi},
  journal={Journal of Econometrics},
  volume={147},
  number={1},
  pages={186--197},
  year={2008},
  publisher={Elsevier}
}

@article{nielsen2010fundamentals,
  title={The {F}undamentals of {F}undamental {F}actor {M}odels},
  author={Nielsen, Frank and Bender, Jennifer},
  journal={MSCI Barra Research Paper},
  number={2010-24},
  year={2010}
}
\end{document}